\documentclass[twoside,10pt]{article} % twosided, draft copy
\usepackage{amsfonts}
\usepackage{amsmath}
\usepackage{amssymb}
\usepackage{epsfig}
\usepackage{graphicx}
\usepackage{latexsym}
\usepackage{subfigure}
\usepackage{times}
\usepackage{hyperref}

\newcommand{\app}{\textsc{Sharon}}
\newcommand{\greta}{{\small GRETA}}

\newcommand{\return}{\textsf{\footnotesize RETURN}}
\newcommand{\pattern}{\textsf{\footnotesize PATTERN}}

\newcommand{\where}{\textsf{\footnotesize WHERE}}
\newcommand{\group}{\textsf{\footnotesize GROUP-BY}}
\newcommand{\within}{\textsf{\footnotesize WITHIN}}
\newcommand{\slide}{\textsf{\footnotesize SLIDE}}

\newcommand{\mycount}{\textsf{\footnotesize COUNT}}
\newcommand{\mymin}{\textsf{\footnotesize MIN}}
\newcommand{\mymax}{\textsf{\footnotesize MAX}}
\newcommand{\mysum}{\textsf{\footnotesize SUM}}
\newcommand{\myavg}{\textsf{\footnotesize AVG}}

\newcommand{\mystart}{\textsf{\footnotesize START}}
\newcommand{\mymid}{\textsf{\footnotesize MID}}
\newcommand{\myend}{\textsf{\footnotesize END}}

\makeatletter
\providecommand*{\cupdot}{%
  \mathbin{%
    \mathpalette\@cupdot{}%
  }%
}
\newcommand*{\@cupdot}[2]{%
  \ooalign{%
    $\m@th#1\cup$\cr
    \hidewidth$\m@th#1\cdot$\hidewidth
  }%
}

\usepackage[ruled]{algorithm}
\PassOptionsToPackage{noend}{algpseudocode}
\usepackage{algpseudocode}

\renewcommand{\algorithmiccomment}[1]{\bgroup\hfill//~#1\egroup}

\algnewcommand\algorithmicswitch{\textbf{switch}}
\algnewcommand\algorithmiccase{\textbf{case}}
\algnewcommand\algorithmicassert{\texttt{assert}}
\algnewcommand\Assert[1]{\State \algorithmicassert(#1)}%
% New "environments"
\algdef{SE}[SWITCH]{Switch}{EndSwitch}[1]{\algorithmicswitch\ #1\ \algorithmicdo}{\algorithmicend\ \algorithmicswitch}%
\algdef{SE}[CASE]{Case}{EndCase}[1]{\algorithmiccase\ #1}{\algorithmicend\ \algorithmiccase}%
\algtext*{EndSwitch}%
\algtext*{EndCase}%

\usepackage[font=small,labelfont=bf,tableposition=top]{caption}

\usepackage{listings}
\lstset{
	basicstyle=\small,
	numbers=none,
	numbersep=5pt,
	numberstyle=\tiny,
	stepnumber=2, 
	captionpos=b,
	keepspaces=true,
	mathescape=true,
    frame=bt,
	float=t,
}
\lstset{
	emph={DERIVE, FROM, WHERE, AND, ANTI, JOIN, USING, NOT, EXISTS, SELECT, if, then, else, end, PATTERN, SEQ, STOCK, DETECT, INSERT, INTO, VALUES, DELETE, IN, UPDATE, SET, OR, GROUP, BY, AS, VARIABLE, IF, END, FOR, EACH, Input, Output, DECLARE, WITHIN, CREATE, VIEW, STREAM, PRIORITY, high, NEW, RUN, ANTIJOIN, ON, LOOK, UP, WITH, SAME, EVENT, for, each, while, end, return, Input, Output, WHILE, THEN, ELSE, RETURN, SLIDE, do, and, continue, NEXT, FIRST, LAST},
	emphstyle={\textsf}%
}

\renewcommand{\algorithmiccomment}[1]{/* #1 */}

\usepackage{amsthm}
\setcounter{secnumdepth}{5}

\usepackage{setspace,xspace}
\graphicspath{{figures/}}
 % big-O notation/symbol

\newcommand{\nop}[1]{}

% ============================================================
%                      Margins and heights
% ============================================================

% margin paragraphs, for comments in the draft
\setlength{\marginparwidth}{1in}
% Also define a command ``\rem'' for remarks.
\newcommand{\rem}[1]{\marginpar{\flushleft{#1}}}
\renewcommand{\rem}[1]{} % disable remarks

% ------------------------------------------------------------

% Junk formatting. For Elke.
%
% set margins to 1 inch all around the page, no header or footer.
% use maximum space (no topmargin... header is one inch from paper border)
%\setlength{\textheight}{9in}
%\setlength{\textwidth}{6.2in}
%\setlength{\topmargin}{-0.125in}
%\setlength{\oddsidemargin}{-.2in}
%\setlength{\evensidemargin}{-.2in}
%\setlength{\headsep}{0in}

% ------------------------------------------------------------

% And now the pretty formatting for the production version

\renewcommand{\algorithmiccomment}[1]{/* #1 */}

\graphicspath{{figures/}}
 \newtheorem{lemma}{Lemma}
\newtheorem{definition}{Definition}
\newtheorem{example}{Example}

%==============================RIMMA_CUSTOM==================================
\usepackage{multirow}
\newcommand{\eat}[1] {}

%==============================END_RIMMA_CUSTOM==================================

% ============================================================
% Headings. Use ``Fanch Headers''
\usepackage{fancyhdr}

\fancyhf{}
\fancyhead[LE,RO]{\bfseries\thepage}
\fancyhead[LO]{}
\fancyhead[RE]{}

\addtolength{\headheight}{1.5pt}
\fancypagestyle{plain}{%
  \fancyhead{}
  
}

% ============================================================

% WPI library conventions: 1.5in margin left, one inch
% elsewhere. (except that this is twosided if desired)

\setlength{\topmargin}{0pt}

 \newlength{\hoehe}
 \setlength{\hoehe}{\paperheight}
 \addtolength{\hoehe}{-2in} % leave one inch at top and bottom
 \addtolength{\hoehe}{-\topmargin}
 \addtolength{\hoehe}{-\headheight}
 \addtolength{\hoehe}{-\headsep}
 \setlength{\textheight}{\hoehe}

 \newlength{\breite}
 \setlength{\breite}{\paperwidth}
 \addtolength{\breite}{-2.5in}
 \setlength{\textwidth}{\breite}
 % for two-paged output

 \setlength{\oddsidemargin}{0in}
 \setlength{\evensidemargin}{0.5in}
 
% ============================================================
% Use doublespacing for the draft
\onehalfspacing
% \doublespacing
% ============================================================

% ============================================================
\setcounter{tocdepth}{2}
% ============================================================

% ============================================================
% A prettier font.
 % Use Palatino Font for Roman
 % Use Helvetica Font for SansSerif
%\usepackage{zplppl}            % Use MathKit Palatino Mathfonts (Alan Hoenig)
% ============================================================

% ============================================================
% Titlepage
\title{\fontsize{15}{15}\selectfont Sharon: Shared Online Event Sequence Aggregation\\\vspace*{1cm}
\large Technical Report\\
January 15, 2018
\vspace*{1cm}}
\author{\large Olga Poppe$^*$, Allison Rozet$^*$, Chuan Lei$^{**}$, Elke A. Rundensteiner$^*$, and David Maier$^{***}$}
%\date{Draft. \today}
\date{\Large 
\large $^*$Worcester Polytechnic Institute, Worcester, MA 01609\\
$^{**}$IBM Research, Almaden, 650 Harry Rd, San Jose, CA 95120\\
$^{***}$Portland State University, 1825 SW Broadway, Portland, OR 97201\\
*opoppe$|$amrozet$|$rundenst@wpi.edu, **chuan.lei@ibm.com, ***maier@pdx.edu
\vfill
%\begin{minipage}{18cm}
%\textbf{Committee Members:}\\ \\
%Prof. Elke A. Rundensteiner, Worcester Polytechnic Institute, Advisor.\\
%Prof. Dan Dougherty, Worcester Polytechnic Institute. \\
%Prof. Mohammed Y. Eltabakh, Worcester Polytechnic Institute.\\
%Prof. David Maier, Portland State University.\\
%\end{minipage}
}
% ============================================================

% ============================================================
% Macro definitions
% ============================================================

\pagestyle{empty}

\begin{document}
\maketitle

\begin{spacing}{0.8}
{\footnotesize \noindent \textbf{Copyright} \copyright{} 2018 by
Olga Poppe. Permission to make digital or hard copies of all or
part of this work for personal use is granted without fee provided
that copies bear this notice and the full citation on the first
page. To copy otherwise, to republish, to post on servers or to
redistribute to lists, requires prior specific permission. }
\end{spacing}

\clearpage
\pagestyle{fancy}

\clearpage
\tableofcontents

\pagenumbering{arabic}
\setcounter{page}{1}

% 
% Since this is a ``report'', the topmost level of hierarchy is
% ``Chapter'', not section as you may be used to. Chapters are
% enumerated starting from 1, so Sections are 1.1, Subsections are
% 1.1.1, subsubsections don't get numbers. (You can change that, if
% you want them to be called 1.1.1.1)
%

\newpage
\begin{abstract}
Streaming systems evaluate massive workloads of event sequence aggregation queries. State-of-the-art approaches suffer from long delays caused by not sharing intermediate results of similar queries and by constructing event sequences prior to their aggregation. To overcome these limitations, our Shared Online Event Sequence Aggregation (\app) approach shares intermediate aggregates among multiple queries while avoiding the expensive construction of event sequences. Our \app\ optimizer faces two challenges. One, a sharing decision is not always beneficial. Two, a sharing decision may exclude other sharing opportunities. To guide our \app\ optimizer, we compactly encode sharing candidates, their benefits, and conflicts among candidates into the \app\ graph. Based on the graph, we map our problem of finding an optimal sharing plan to the Maximum Weight Independent Set (MWIS) problem. We then use the guaranteed weight of a greedy algorithm for the MWIS problem to prune the search of our sharing plan finder without sacrificing its optimality. The \app\ optimizer is shown to produce sharing plans that achieve up to an 18-fold speed-up compared to state-of-the-art approaches.
\end{abstract}
\section{Introduction}
\label{sec:introduction}

Complex Event Processing (CEP) is a prominent technology for supporting time-critical streaming applications ranging from public transport to e-commerce. CEP systems continuously evaluate massive query workloads against high-rate event streams to detect event sequences of interest, such as vehicle trajectories and purchase patterns. Aggregation functions are applied to these sequences to provide  summarized insights, such as the number of trips on a certain route to predict traffic jams or the number of items purchased after buying another item for targeted advertisement. CEP applications must react to critical changes of these aggregates in real time to compute best routes or increase profit~\cite{ADGI08, WDR06, ZDI14}.

%------------------------------
\textbf{Motivating Examples}.
We now describe two time-critical multi-query event sequence aggregation scenarios. 

$\bullet$ \textit{\textbf{Urban transportation services}}. With the growing popularity of ridesharing services such as Uber and Lyft, their systems face multiple challenges including real-time analysis of vehicle trajectories, geospatial prediction, and alerting. These systems evaluate query workloads against high-rate streams of drivers' position reports and riders' requests to infer the current supply and demand situation on each route. They incorporate traffic conditions to compute the best route for each trip. They instantaneously react to critical changes to prevent waste of time, reduce costs and pollution, and increase riders' satisfaction and drivers' profit. With thousands of drivers and over 150 requests per minute in New York City~\cite{uber1,uber3}, real-time traffic analytics and ride management is a challenging task.

% Events
Queries $q_1$--$q_7$ in Figure~\ref{fig:queries} compute the count of trips on a route as a measure of route popularity. They consume a stream of vehicle-position reports. Each report carries a time stamp in seconds, a car identifier and its position. Here, event type corresponds to a vehicle position. For example, a vehicle on Main Street sends a position report of type \textit{MainSt}.
% Sequences
Each trip corresponds to a sequence of position reports from the same vehicle (as required by the predicate \textit{[vehicle]}) during a 10-minutes long time window that slides every minute. The predicates and window parameters of $q_2$--$q_7$ are identical to $q_1$ and thus are not shown for compactness.
% Sharing opportunities
Table~\ref{tab:candidates} summarizes the common patterns in this workload. For example, pattern $p_1=(\mathit{OakSt},$ $\mathit{MainSt})$ appears in queries $q_1$--$q_4$. Sharing the aggregation of common patterns among multiple similar queries is vital to speed up system responsiveness. 

\begin{figure}[t]
\centering
\includegraphics[width=0.6\columnwidth]{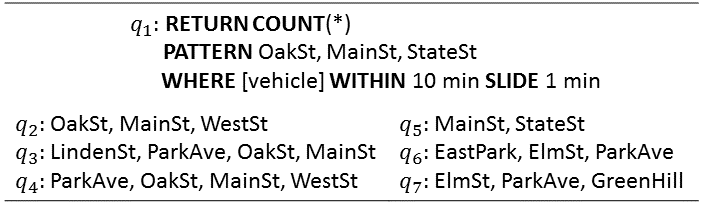}
\caption{Traffic monitoring workload $Q$}
\label{fig:queries}
\end{figure}
\begin{table}[t]
\centering
\begin{tabular}{|l|l|}
\hline
\textbf{Pattern} $p$ & \textbf{Queries} $Q_p \subseteq Q$ \\
\hline
\hline
$p_1=(\mathit{OakSt},\mathit{MainSt})$ & $q_1,q_2,q_3,q_4$ \\
\hline
$p_2=(\mathit{ParkAve},\mathit{OakSt})$ & $q_3,q_4$ \\
\hline
$p_3=(\mathit{ParkAve},\mathit{OakSt},\mathit{MainSt})$ & $q_3,q_4$ \\
\hline
$p_4=(\mathit{MainSt},\mathit{WestSt})$ & $q_2,q_4$ \\
\hline
$p_5=(\mathit{OakSt},\mathit{MainSt},\mathit{WestSt})$ & $q_2,q_4$ \\
\hline
$p_6=(\mathit{MainSt},\mathit{StateSt})$ & $q_1,q_5$ \\
\hline
$p_7=(\mathit{ElmSt},\mathit{ParkAve})$ & $q_6,q_7$ \\
\hline
\end{tabular}
\vspace{2mm}
\caption{Sharing candidates of the form $(p,Q_p)$ in the workload~$Q$}
\label{tab:candidates}
\end{table}

%%%%%
$\bullet$ \textit{\textbf{E-commerce}} systems monitor customer click patterns to identify the purchase of which item increases the chance of buying another item. Such purchase dependencies between items serve as a foundation for prediction, planning, and targeted ads on an online shopping website. 

% Events
Queries $q_8$--$q_{11}$ consume a stream of item purchases. Each event carries a time stamp in seconds, customer identifier, type of item, e.g., \textit{Laptop}. 
% Sequences
The choice of a laptop often determines the laptop cases, adapters, keyboard protectors, etc. that may be purchased next ($q_8$--$q_{10}$). A laptop may even determine a customer's phone preferences, e.g., Mac users are likely to choose an iPhone over other phones. The model of an iPhone, in turn, determines screen protectors for it ($q_{11}$).
% Queries
Thus, queries $q_8$--$q_{11}$ compute the count of item sequences during a 20-minute time window that slides every minute. 
The pattern \textit{(Laptop, Case)} appears in all four queries in this workload. The aggregation of such patterns could be shared among these queries to achieve prompt updates of the recommendation model according to dynamically changing user preferences.

\begin{figure}[t]
\centering
\includegraphics[width=0.45\columnwidth]{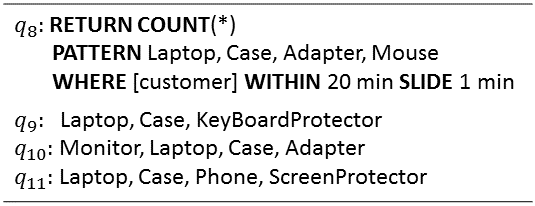}
\caption{Purchase monitoring workload}
\end{figure}

%------------------------------
\textbf{State-of-the-Art Approaches} can be divided into three groups (Figure~\ref{fig:spectrum}):

$\bullet$ \textbf{\textit{Non-shared two-step approaches}}, including Flink~\cite{flink}, SASE~\cite{ZDI14}, Cayuga~\cite{DGPRSW07}, and ZStream~\cite{MM09}, evaluate each query independently from other queries in the workload. Furthermore, these approaches do not offer optimization strategies specific to event sequence aggregation queries. Without special optimization techniques, these approaches first construct event sequences and then aggregate them. Since the number of event sequences is polynomial in the number of events~\cite{ZDI14, QCRR14}, event sequence construction is an expensive step. Our experiments in Section~\ref{sec:evaluation} confirm that such a \textit{non-shared two-step} approach implemented by the popular open-source streaming system Flink~\cite{flink} does not terminate, even for low-rate streams of a few hundred events per second. 

\begin{figure}[t]
\centering
\includegraphics[width=0.6\columnwidth]{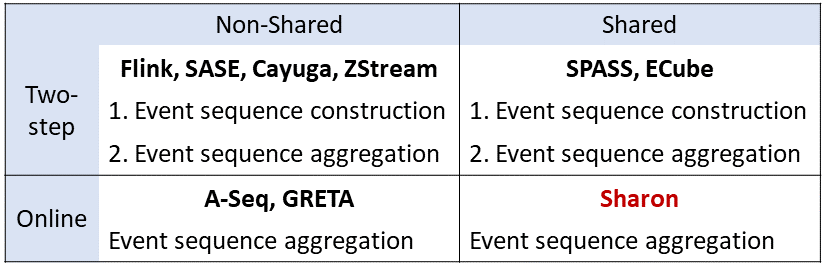}
\caption{Event sequence aggregation approaches}
\label{fig:spectrum}
\end{figure}

$\bullet$ \textbf{\textit{Shared two-step approaches}} such as SPASS~\cite{RLR16} and ECube~\cite{LRGGWAM11} focus on \textit{shared} event sequence construction, not on event sequence aggregation. If these approaches are applied to aggregate event sequences, they would construct all sequences prior to their aggregation. This event sequence construction step degrades system performance. Our experiments in Section~\ref{sec:evaluation} confirm that such a \textit{shared two-step} approach implemented by SPASS~\cite{ZDI14} requires 41 minutes per window, even for low-rate streams of a few hundred events per second. Such long delays are not acceptable for time-critical applications that require a response within a few seconds~\cite{linear_road}.

$\bullet$ \textbf{\textit{Non-shared online approaches}} such as A-Seq~\cite{QCRR14} and \greta~\cite{PLRM18} compute event sequence aggregation \textit{online}, i.e., without constructing the sequences. 
A-Seq incrementally maintains a set of aggregates for each pattern and instantaneously discards each event once it updates the aggregates.
\greta\ extends A-Seq to a broader class of queries and thus has higher computation costs. Neither of these approaches tackles the sharing optimization problem to determine which queries should share the aggregation of which patterns such that the latency of a workload is minimized -- which is the focus of this paper. These approaches lack optimizers that can identify shared computations among multiple queries.

%------------------------------
\textbf{Challenges}.
We tackle the following open problems:

$\bullet$ \textit{\textbf{Online yet shared event sequence aggregation}}.
These two optimization techniques cannot be simply combined because they impose contradictory constraints on the underlying execution strategy. 
For example, if query $q_4$ \textit{shares} the aggregation results of patterns $p_2$ and $p_4$ with other queries (Table~\ref{tab:candidates}), the aggregates for $p_2$ and $p_4$ must be combined to form the final results for $q_4$. To ensure correctness, this result combination must be aware of the temporal order between sequences matched by $p_2$ and $p_4$ and their expiration. To analyze these temporal relationships, event sequences must be constructed. This requirement contradicts the key idea of the \textit{online} approaches that avoid event sequence construction. 

$\bullet$ \textit{\textbf{Benefit of sharing}}.
Sharing the aggregation computation for a pattern $p$ by a set of queries $Q_p$ containing $p$ is not always beneficial, since this sharing may introduce considerable CPU overhead for combining shared intermediate aggregates to form the final results for each query in $Q_p$. Thus, an accurate sharing benefit model is required to assess the quality of a sharing plan.

$\bullet$ \textit{\textbf{Intractable sharing plan search space}}.
The search space for a high-quality sharing plan is exponential in the number of sharing candidates (Table~\ref{tab:candidates}). Since the event rate may fluctuate, the benefit of sharing a pattern may change over time. To achieve a high sharing benefit, the sharing plan may have to be dynamically adjusted. Hence, an effective yet efficient optimization algorithm for sharing plan selection is required.

\begin{figure}[t]
\centering
\includegraphics[width=0.65\columnwidth]{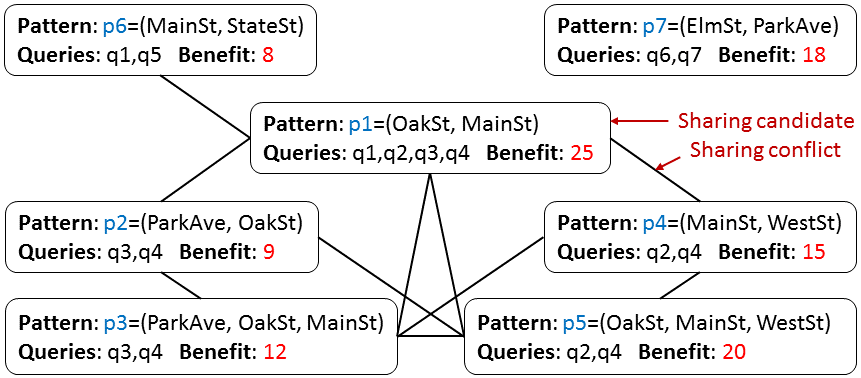}
\caption{\app\ graph for the traffic use case (Figure~\ref{fig:queries})}
\label{fig:graph}
\end{figure} 

%------------------------------
\textbf{Our \app\ Approach}.
We propose the \underline{Shar}ed \underline{On}line Event Sequence Aggregation (\app) optimization techniques to tackle these challenges. 
Since sharing a pattern $p$ by a set of queries $Q_p$ is not always beneficial, we develop a \textit{sharing benefit model} to assess the quality of a sharing candidate $(p,Q_p)$. The model compares the gain of sharing $p$ among queries $Q_p$ to the overhead of combining shared aggregates of $p$ to form the final results for each query in $Q_p$. Non-beneficial candidates are pruned.
Since a decision to share a pattern may prevent the sharing of another pattern by the same query, we define the notion of \textit{sharing conflicts} among sharing candidates.
We compactly encode sharing candidates as vertices and conflicts among these candidates as edges of the \textit{\app\ graph} (Figure~\ref{fig:graph}). Each vertex is assigned a weight that corresponds to the benefit of sharing the respective candidate.
Based on the graph, we map our Multi-query Event Sequence Aggregation problem to the Maximum Weight Independent Set (MWIS) problem. We then utilize the guaranteed minimal weight of the approximate algorithm GWMIN~\cite{Sakai2003} for MWIS to prune conflict-ridden candidates. Since conflict-free candidates always belong to an optimal sharing plan, they can also be excluded from the search early on.
Based on the reduced graph, our sharing plan finder further prunes sharing plans with conflicts and returns an optimal plan (i.e., plan with minimal estimated latency) to guide our executor at runtime.
In summary, \app\ seamlessly combines two optimization strategies into one integrated solution. Namely, it \textit{shares} sequence aggregation among multiple queries, while computing sequence aggregation \textit{online}.

%------------------------------
\textbf{Contributions}. 
Our key innovations are the following.

% BENEFIT
1)~We design the \textit{sharing benefit model} to assess the quality of a sharing candidate. Non-beneficial candidates are pruned.

% CONFLICT
2)~We identify \textit{sharing conflicts} among candidates and encode candidates, their benefits, and conflicts among them into the \textit{\app\ graph}.  

% Mapping
3)~We map our Multi-query Event Sequence Aggregation problem to the Maximum Weight Independent Set (MWIS) problem and utilize the guaranteed weight of the approximate algorithm for MWIS to prune conflict-ridden candidates.

% RESULT
4)~Based on the reduced \app\ graph, we introduce the \textit{sharing plan finder} that prunes sharing plans with conflicts and returns an optimal sharing plan.

% EXPERIMENTS
5)~Our experiments using real data sets~\cite{uber1, linear_road} demonstrate that sharing plans produced by the \app\ optimizer achieve an 18-fold speed-up and use two orders of magnitude less memory compared to Flink~\cite{flink}, A-Seq~\cite{QCRR14}, and SPASS~\cite{RLR16}.

%------------------------------
\textbf{Outline}. 
Section~\ref{sec:overview} provides an overview of our approach. 
We define the sharing benefit model and sharing conflicts in Sections~\ref{sec:benefit}--\ref{sec:graph}. 
We reduce the search space and design the sharing plan finder in Sections~\ref{sec:pruning}--\ref{sec:finder}. 
We discuss extensions of our core approach in Section~\ref{sec:discussion}.
Experiments are described in Section~\ref{sec:evaluation}. 
Section~\ref{sec:related_work} covers related work,
while Section~\ref{sec:conclusions} concludes the paper and describes future work.

\section{Sharon Approach Overview}
\label{sec:overview}

\subsection{Sharon Data and Query Model}
\label{sec:model}

%------------------------------
\textit{Time} is represented by a linearly ordered \textit{set of time points} $(\mathbb{T},\leq)$, where $\mathbb{T} \subseteq \mathbb{Z^\geq}$ (non-negative integers).
%
%------------------------------
An \textit{event} is a message indicating that something of interest to the application happened in the real world. An event $e$ has a \textit{time stamp} $e.time \in \mathbb{T}$ assigned by the event source. 
An event $e$ belongs to a particular \textit{event type} $E$, denoted $e.type=E$ and described by a \textit{schema} that specifies the set of \textit{event attributes} and the domains of their values. 
%
%------------------------------
Events are sent by event producers (e.g., vehicles) on an input \textit{event stream I}. An event consumer (e.g., carpool system) monitors the stream with a workload of queries that detect and aggregate event sequences. We borrow the query syntax and semantics from SASE~\cite{ADGI08}. 
% Our queries in Section~\ref{sec:introduction} are expressed using this syntax. 

\begin{definition}(\textbf{Event Sequence Pattern}) 
Given event types $E_1, \dots E_l$, an \textbf{\textit{event sequence pattern}} has the form $P = (E_1 \dots E_l)$ where $l \in \mathbb{N},\ l \geq 1,$ is the length of $P$. 

Given a stream $I$, an \textbf{\textit{event sequence}} $s=(e_1 \dots e_l)$ is a match of a pattern $P$ if  
$\forall i,j \in \mathbb{N},$ $1 \leq i < j \leq l,$ the following conditions hold:
$e_i,e_j \in I$,
$e_i.\mathit{type} = E_i,$ $e_j.\mathit{type} = E_j$, and
$e_i.\mathit{time} < e_j.\mathit{time}$.
Event $e_1$ is called a \mystart\ event, $e_l$ is an \myend\ event, and $\{ e_2,\dots, e_{l-1} \}$ are \mymid\ events of $s$.
\label{def:pattern}
\end{definition}

\begin{definition}(\textbf{Event Sequence Aggregation Query}) 
An \textbf{\textit{event sequence aggregation query}} $q$ consists of five clauses:

$\bullet$ Aggregation result specification (\return\ clause),

$\bullet$ Event sequence pattern $P$ (\pattern\ clause),

$\bullet$ Predicates $\theta$ (optional \where\ clause),

$\bullet$ Grouping $G$ (optional \group\ clause), and

$\bullet$ Window $w$ (\within\ and \slide\ clause).

A query $q$ requires that all events in an event sequence $s$ are
matched by the pattern $P$ (Definition~\ref{def:pattern}),
satisfy the predicates $\theta$,
have the same values of the grouping attributes $G$, and 
are within one window $w$. 
Event sequences matched by $q$ are grouped by the values of $G$. The aggregation function of $q$ is applied to each group and a result is returned per group and per window. We focus on distributive (such as \mycount, \mymin, \mymax, \mysum) and algebraic aggregation functions (such as \myavg), since they can be computed incrementally~\cite{Gray97}. 

Let $e$ be an event of type $E$ and $\mathit{attr}$ be an attribute of $e$.
\mycount$(*)$ returns the number of all sequences per group, while
\mycount$(E)$ computes the number of all events $e$ in all sequences per group.
\mymin$(E.\mathit{attr})$ (\mymax$(E.\mathit{attr})$) returns the minimal (maximal) value of $\mathit{attr}$ for all events $e$ in all sequences per group.
\mysum$(E.\mathit{attr})$ calculates the summation of the value of $\mathit{attr}$ of all events $e$ in all sequences per group.
Lastly, \myavg$(E.\mathit{attr})$ is computed as \mysum$(E.\mathit{attr})$ divided by \mycount$(E)$ per group.
\label{def:query}
\end{definition}

\textbf{Assumptions}.
To initially focus the discussion, we assume that:
(1)~A pattern $p$ is shared among all queries containing $p$.
(2)~All queries have the same predicates, grouping, and windows.
(3)~An event type appears at most once in a pattern.
(4)~The workload is static.
We sketch extensions of our approach to relax these assumptions in Section~\ref{sec:discussion}.

%%%%%%%%%%%%%%%%%%%%%%%%%%%%%%%%%%%%%%%%%%%%%%%%%%%%%%%%%%%%%%%%
\subsection{Sharon Framework}
\label{sec:framework}

We target \textit{time-critical} applications that require aggregation results within a few seconds (Section~\ref{sec:introduction}). 
%
% \textit{Latency} corresponds to the average time difference between the time point of the aggregation result output by a query in the workload and the arrival time of the latest event that contributed to this result.
%
Given a workload $Q$ and a stream $I$, the \textbf{\textit{Multi-query Event Sequence Aggregation (MESA) Problem}} is to determine which queries share the aggregation of which patterns (i.e., a sharing plan $\mathcal{P}$) such that the \textit{latency of evaluating the workload $Q$ according to the plan $\mathcal{P}$ against the stream $I$ is minimal}. 

To solve this problem, our \app\ framework deploys the following components (Figure~\ref{fig:overview}). 
Given a workload $Q$, our \textbf{Static Optimizer} finds an optimal sharing plan at compile time.
To this end, it identifies sharing candidates of the form $(p,Q_p)$ where $p$ is a pattern that could potentially be shared by a set of queries $Q_p \subseteq Q$.
It then estimates the benefit of each candidate $(p,Q_p)$    (Section~\ref{sec:benefit}),
determines sharing conflicts among these candidates, and 
compactly encodes all candidates, their benefits and conflicts into a \app\ graph (Section~\ref{sec:graph}).
Based on the graph, the optimizer prunes large portions of the search space  (Section~\ref{sec:pruning}) and returns an optimal sharing plan (Section~\ref{sec:finder}). 
Based on this plan, our \textbf{Runtime Executor} computes the aggregation results for each shared pattern and then combines these shared aggregations to obtain the final results for each query in the workload $Q$ (Section~\ref{sec:benefit}). 

\begin{figure}[t]
\centering
\includegraphics[width=0.7\columnwidth]{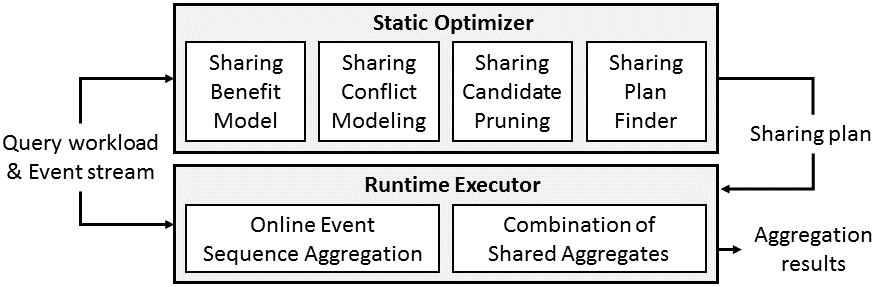}
\caption{\app\ framework}
\label{fig:overview}
\end{figure}

\section{Sharing Benefit Model}
\label{sec:benefit}

Our optimizer first identifies sharing candidates in a workload (Section~\ref{sec:candidate}).
It then decides whether to apply the \textit{Non-Shared} or the \textit{Shared method} to each query (Sections~\ref{sec:non-shared}--\ref{sec:shared}). Both methods are borrowed from A-Seq~\cite{QCRR14}.
Lastly, it estimates the benefit of sharing each candidate (Section~\ref{sec:bvalue}).

%%%%%%%%%%%%%%%%%%%%%%%%%%%%%%%%%%%%%%%%%%%%%%%%%%%%%%%%%%%%%%%
\subsection{Sharing Candidate}
\label{sec:candidate}

First, our optimizer identifies those patterns that could potentially be shared by queries in a given workload. 

%----------------------------
\begin{definition}(\textbf{Sharable Pattern, Sharing Candidate})
Let $Q$ be a workload and $p$ be a pattern that appears in queries $Q_p \subseteq Q$. The pattern $p$ is \textit{sharable} in $Q$ if $p.length > 1$ and $|Q_p|>1$. A sharable pattern $p$ and queries $Q_p$ constitute a \textit{sharing candidate}, denoted as $(p,Q_p)$.
\label{def:sharable}
\end{definition}

Existing pattern mining approaches can detect sharable patterns. Due to space constraints, they are described in Appendix~\ref{app:ccspan}.

The pattern $P^i$ of a query $q_i \in Q_p$ consists of three sub-patterns, namely, $\mathit{prefix}^i$, $p$, and $\mathit{suffix}^i$ defined below. 

% \vspace*{-2mm}
% \begin{figure}[h]
% \centering
% \includegraphics[width=0.6\columnwidth]{}
% \caption{Sub-patterns of $q_i$}
% \label{fig:pattern}
% \end{figure} 
% \vspace*{-2mm}

%----------------------------
\begin{definition}(\textbf{Prefix and Suffix of a Sharable Pattern in a Query})
Let $P^i = (E^i_1 \dots E^i_n)$ be the pattern of a query $q_i$ and 
$p = (E^i_m \dots$ $E^i_{m+l})$ be a sharable pattern that appears in $q_i$
where $m,l,n \in \mathbb{N}$, $1 \leq m$, and $m+l \leq n$.
Then $\mathit{prefix}^i = (E^i_1 \dots E^i_{m-1})$ is called the \textit{prefix} and 
$\mathit{suffix}^i = (E^i_{m+l+1} \dots E^i_n)$ is called the \textit{suffix} of $p$ in $q_i$.
\label{def:sub-patterns}
\end{definition}

%%%%%%%%%%%%%%%%%%%%%%%%%%%%%%%%%%%%%%%%%%%%%%%%%%%%%%%%%%%%%%%%
\subsection{Non-Shared Method}
\label{sec:non-shared}

While A-Seq~\cite{QCRR14} considers grouping, predicates, negation, and various aggregation functions, we now sketch only its key ideas, namely, online event sequence aggregation and event sequence expiration. We use event sequence count as an example, i.e., \mycount(*) (Definition~\ref{def:query}).

\textbf{Online Event Sequence Aggregation}
A-Seq computes the count of event sequences online, i.e., without constructing and storing these event sequences. To this end, it maintains a count for each prefix of a pattern. The count of a prefix of length $j$ is incrementally computed based on its previous value and the new value of the count of the prefix of length $j-1$.

\begin{example}
Let an event be described by its type and time stamp, e.g., $a_1$ is an event of type $A$ with time stamp 1. In Figure~\ref{fig:non-shared-1}, we count event sequences matched by the pattern $(A,B)$, denoted $\mathit{count}(A,B)$.
A count is maintained for each prefix of the pattern, i.e., for $(A)$ and $(A,B)$. 
The value of $\mathit{count}(A,B)$ is updated every time a $b$ arrives by summing $\mathit{count}(A)$ and $\mathit{count}(A,B)$. For example, when $b_4$ arrives, it is appended to each previously matched $a$ to form new sequences $(a_1,b_4)$ and $(a_2,b_4)$. The number of new sequences is $\mathit{count}(A)=2$. In addition, the previously formed sequence $(a_1,b_2)$ is kept. The number of previous sequences is $\mathit{count}(A,B)=1$. Thus, $\mathit{count}(A,B)$ is updated to 3.
\label{ex:non-shared-1}
\end{example}

\begin{figure}[t]
	\centering
    \subfigure[Online sequence aggregation]{
    	\includegraphics[width=0.3\columnwidth]{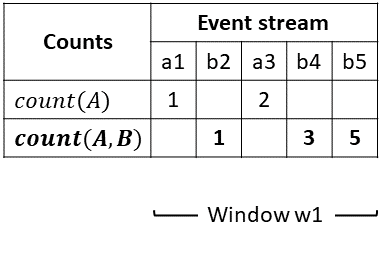}
    	 \label{fig:non-shared-1}
	}    
	\subfigure[Event sequence expiration]{
    	\includegraphics[width=0.3\columnwidth]{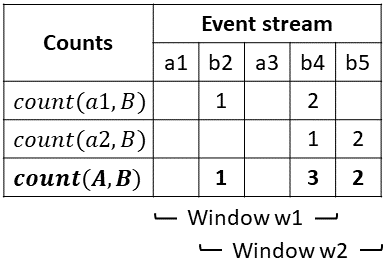}
    	\label{fig:non-shared-2}
	}
    \vspace*{-2mm}
	\caption{Non-Shared method}
	\label{fig:non-shared}
\end{figure}

\textbf{Event Sequence Expiration}.
Due to the sliding window semantics of our queries (Definition~\ref{def:query}), event sequences expire over time. To avoid the re-computation of all affected aggregates, we observe that a \mystart\ event of a sequence (Definition~\ref{def:pattern}) expires sooner than any other event in it. Thus, we maintain the aggregates per each matched \mystart\ event. When a new event arrives, only the counts of not-expired \mystart\ events are updated. When an \myend\ event $e$ arrives, it updates the final counts for all windows that $e$ falls into.

\begin{example}
In Figure~\ref{fig:non-shared-2}, assume a window of length four slides by one. A count is now maintained per each matched $a$. When $b_5$ arrives, $a_1$ is expired, $\mathit{count}(a_1,B)$ is ignored, $\mathit{count}(a_2,B)$ and $\mathit{count}(A,B)$ are updated for window $w_2$.
\label{ex:non-shared-2}
\end{example}

\textbf{Data Structures}.
Our \app\ Executor maintains 
a hash table that maps a pattern to its count.
Thus, we can access and update a count in constant time.

%----------------------------- COSTS
\textbf{Time Complexity}.
The query $q_i$ processes each event that it matches. The rate of matched events is computed as the sum of rates of all event types in the pattern $P^i$ of $q_i$ (Definition~\ref{def:sub-patterns}):
\begin{equation}
\begin{split}
\mathit{Rate}(P^i) = \sum_{j=1}^{n} \mathit{Rate}(E_j^i) 
\end{split}
\end{equation}

Since counts are maintained per \mystart\ event and an event type appears at most once in a pattern, each matched event updates one count per each not-expired \mystart\ event. There are $\mathit{Rate}(E_1^i)$ \mystart\ events. In summary, the time complexity of processing the query $q_i$ by the Non-Shared method is:
\begin{equation}
\begin{split}
\mathit{NonShared}(p,q_i) = \mathit{Rate}(E_1^i) \times \mathit{Rate}(P^i)
\label{eq:non-shared-1-query}
\end{split}
\end{equation}

For the set of queries $Q_p$, the time complexity corresponds to the summation of the time complexity for each query $q_i$.
\begin{equation}
\mathit{NonShared}(p,Q_p) = \sum_{q_i \in Q_p} \mathit{NonShared}(p,q_i)
\label{eq:non-shared-k-queries}
\end{equation}

% The pattern $p$ is computed once by each query $q_i \in Q_p$. The time complexity of the re-computation caused by each query $q_i$ is determined as follows.
% \begin{equation}
% \mathit{Recomp}(p,q_i) = \mathit{Rate}(E_1^i) \times \mathit{Rate}(p) 
% \label{eq:non-shared-overhead-1-query}
% \end{equation}

% The overall re-computation overhead corresponds to the summation of overhead caused by each query $q_i \in Q_p$.
% \begin{equation}
% \mathit{Recomp}(p,Q_p) = \sum_{q_i \in Q_p} \mathit{Recomp}(p,q_i)
% \label{eq:non-shared-overhead-k-queries}
% \end{equation}
% \vspace*{-6mm}

%%%%%%%%%%%%%%%%%%%%%%%%%%%%%%%%%%%%%%%%%%%%%%%%%%%%%%%%%%%%%%%%%%%%%%%
\subsection{Shared Method}
\label{sec:shared}

Let $(p,Q_p)$ be a sharing candidate (Definition~\ref{def:sharable}).
Let $\mathit{preffix}^i$ and $\mathit{suffix}^i$ be the prefix and the suffix of $p$ in a query $q_i \in Q_p$ (Definition~\ref{def:sub-patterns}).
The aggregates for $\mathit{preffix}^i$, $p$, and $\mathit{suffix}^i$ are combined to obtain the aggregate for $q_i$. Due to event sequence semantics, the sequences matched by $\mathit{preffix}^i$ must appear before the sequences matched by $p$ which in turn must appear before the sequences matched by $\mathit{suffix}^i$ in the stream. To this end, the executor performs two steps:

(1)~\textbf{\textit{Count computation}}. 
Counts are maintained per each \mystart\ event of $\mathit{prefix}^i$, $p$, and $\mathit{suffix}^i$ (Section~\ref{sec:non-shared}).

(2)~\textbf{\textit{Count combination}}. 
The count of $\mathit{prefix}^i$ is multiplied with the count for each \mystart\ event of $p$. The resulting counts are summed to obtain $\mathit{count}(\mathit{preffix}^i,p)$. This count is further combined with the count of $\mathit{suffix}^i$ analogously.

\begin{figure}[t]
\centering
\includegraphics[width=0.42\columnwidth]{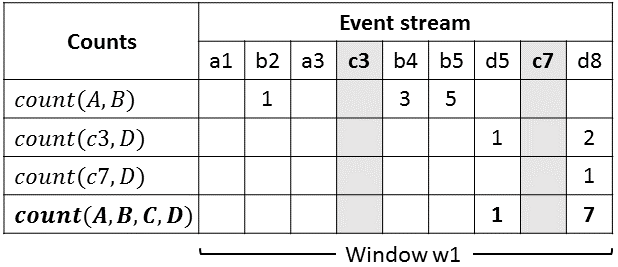}
\caption{Shared method}
\label{fig:shared}
\end{figure}

\begin{example}
In Figure~\ref{fig:shared}, we compute the count of $(A,B,$ $C,D)$ based on the counts of $(A,B)$ and $(C,D)$. Assuming that $a_1$--$d_8$ belong to the same window, $\mathit{count}(A,B)$ is computed as shown in Figure~\ref{fig:non-shared-1}. In addition, a count for each $c$ (i.e., $c_3$ and $c_7$) is maintained. 
When $c_3$ arrives, $\mathit{count}(A,B)=1$. We multiply it with $\mathit{count}(c_3,D)=2$ to obtain $\mathit{count}(A,B,c_3,D)=2$.
Analogously, when $c_7$ arrives, $\mathit{count}(A,B)=5$. It is multiplied with $\mathit{count}(c_7,D)=1$ to get $\mathit{count}(A,B,c_7,D)=5$.
Lastly, we sum these counts to obtain $\mathit{count}(A,B,C,D)=7$. 
\label{ex:executor}
\end{example}

%------------------------------ COSTS
\textbf{Time Complexity}.
1) \textbf{\textit{Count computation}}.  
Since the shared pattern $p$ is processed once for all queries in $Q_p$, the time complexity of processing each query $q_i$ by the Shared method corresponds to the sum of the time complexity of processing $\mathit{prefix}^i$ and $\mathit{suffix}^i$ of $q_i$.
\begin{equation} 
\begin{array}{lll}
\mathit{Comp}(p,q_i) &=& \mathit{Rate}(E_1^i) \times \mathit{Rate}(\mathit{prefix}^i) +\\
&& \mathit{Rate}(E_{m+l+1}^i) \times \mathit{Rate}(\mathit{suffix}^i)
\label{eq:comp}
\end{array}
\end{equation} 

2) \textbf{\textit{Count combination}}. 
The time complexity of count multiplication is the product of the number of counts. 
\begin{equation}
\begin{array}{lll}
\mathit{Comb}(p,q_i) &=& \mathit{Rate}(E_1^i) \times \mathit{Rate}(E_m) \times\\ && \mathit{Rate}(E_{m+l+1}^i)
\end{array}
\label{eq:comb}
\end{equation}

The time complexity of processing $q_i$ by the Shared method is the sum of the time complexity of these two steps.
\begin{equation}
\mathit{Shared}(p,q_i) = \mathit{Comp}(p,q_i) + \mathit{Comb}(p,q_i)
\label{eq:shared-1-query}
\end{equation}

For the set of queries $Q_p$, the time complexity corresponds to the summation of time complexity for each query $q_i$. In contrast to the Non-Shared method (Equation~\ref{eq:non-shared-k-queries}), the pattern $p$ is computed once by the Shared method.
\begin{equation}
\begin{array}{lll}
\mathit{Shared}(p,Q_p) &=& \mathit{Rate}(E_m) \times \mathit{Rate}(p) + \\
&& \sum_{q_i \in Q_p} \mathit{Shared}(p,q_i)
\end{array}
\label{eq:shared-k-queries}
\end{equation}

% {\color{red} On the downside, the CPU overhead caused by the count combination is considerable if the number of counts is large.} 
% \begin{equation}
% \begin{split}
% \mathit{Comb}(p,Q_p) = \sum_{q_i \in Q_p} \mathit{Comb}(p,q_i) 
% \label{eq:shared-overhead}
% \end{split}
% \end{equation}

%%%%%%%%%%%%%%%%%%%%%%%%%%%%%%%%%%%%%%%%%%%%%%%%%%%%%%%%%%%%%%%
\subsection{Benefit of a Sharing Candidate}
\label{sec:bvalue}

% To increase the benefit of sharing a pattern $p$ by a set of queries $Q_p$, we include only those queries that profit from sharing $p$ in the query set $Q_p$. 

% \begin{definition}(\textbf{Query Profiting from Sharing a Pattern}.)
% %
% A query $q_i$ \textit{profits} from sharing a pattern $p$ if 
% $\mathit{Recomp}(p,q_i) > \mathit{Comb}(p,q_i)$ 
% (Equations~\ref{eq:non-shared-overhead-1-query} and \ref{eq:comb}).
% Otherwise, $q_i$ is \textit{profitless}.
% %
% \label{def:profiting-query}
% \end{definition}

% \textbf{Pruning Principle I: \textit{Profitless Query}}. 
% Profitless queries are pruned from the set $Q_p$, i.e., profitless queries compute the aggregation of $p$ independently from each other.

%%%%%%%%%%%%%%%%%%%%%%%
\begin{definition}(\textbf{Benefit of a Sharing Candidate})
The benefit of sharing a pattern $p$ by the set of queries $Q_p$
is computed as the difference between the time complexity of the Non-Shared and Shared methods (Equations~\ref{eq:non-shared-k-queries} and \ref{eq:shared-k-queries}):
\begin{equation}
\begin{split}
\mathit{BValue}(p,Q_p) = \mathit{NonShared}(p,Q_p) - \mathit{Shared}(p,Q_p) 
\label{eq:bvalue}
\end{split}
\end{equation}
A sharing candidate $(p,Q_p)$ is called \textit{beneficial} if 
$\mathit{BValue}(p,$ $Q_p)>0$. It is called \textit{non-beneficial} otherwise.
\end{definition}

\textbf{\textit{Non-Beneficial Candidate Pruning}}.
All non-beneficial candidates can be excluded from further analysis.

Based on this cost model, we conclude that \textit{the number of queries, the length of their patterns}, and \textit{the stream rate} determine the benefit of sharing. We experimentally study the effects of these factors in Section~\ref{sec:evaluation}.

\section{Sharing Conflict Modeling}
\label{sec:graph}

A decision to share a pattern $p$ by a query $q \in Q_p$ may prevent sharing another pattern $p'$ by $q$ if $p$ and $p'$ overlap in $q$. Such sharing candidates are said to be in a \textit{sharing conflict}. We now encode sharing candidates, their benefit, and conflicts into the \app\ graph. Based on the graph, we then reduce the search space of our sharing plan finder (Sections~\ref{sec:pruning}--\ref{sec:finder}).

\begin{example}
In Table~\ref{tab:candidates}, queries $q_3$ and $q_4$ contain the overlapping patterns $p_2=(\mathit{ParkAve},\mathit{OakSt})$ and $p_1=(\mathit{OakSt},\mathit{MainSt})$.
Since the executor computes and stores the aggregates for a pattern as a whole (Section~\ref{sec:benefit}), $q_3$ and $q_4$ can either share $p_1$ or $p_2$, but not both. Thus, the sharing candidates $(p_1,\{q_1,q_2,q_3,q_4\})$ and $(p_2,\{q_3,q_4\})$ give ``contradictory instructions'' for $q_3$ and $q_4$. These candidates are said to be in a sharing conflict. 
However, if $p_1$ were to be shared only by $q_1$ and $q_2$, the sharing conflict between these candidates would be resolved. We sketch the techniques for sharing conflict resolution in Section~\ref{sec:discussion}. 
\label{ex:sharing-incompatibility}
\end{example}

\begin{definition}(\textbf{Sharing Conflict})
Let $p_A=(A_0\ldots A_n)$ and $p_B=(B_0\ldots B_m)$ be patterns and $Q_A$ and $Q_B$ be query sets. 
The sharing candidates $(p_A,Q_A)$ and $(p_B,Q_B)$ are in {\it sharing conflict} if $p_A$ overlaps with $p_B$ in a query $q \in Q_A \cap Q_B$, i.e.,
$\exists k \in \mathbb{N},$ $0 \leq k \leq n,m$ $A_{n-k}\ldots A_n = B_0\ldots B_k$ in $q$. 
The query $q$ causes the conflict between $(p_A,Q_A)$ and $(p_B,Q_B)$.
\label{def:sharing-conflict}
\end{definition}

%------------------------------------------------------------------------
\begin{definition}(\textbf{Valid Sharing Plan})
A \textit{sharing plan} $\mathcal{P}$ is a set of sharing candidates. $\mathcal{P}$ is \textit{valid} if it contains no candidates that are in conflict with each other. 
$\mathcal{P}$ is \textit{invalid} otherwise.
\label{def:valid}
\end{definition}

\begin{definition}(\textbf{Score of a Sharing Plan})
The \textit{score} of a sharing plan $\mathcal{P}=\{ (p_1,Q_{p1}), \ldots, (p_s,Q_{ps})\}$ is:
\begin{equation}
\mathit{Score}(\mathcal{P}) = \sum _{j=1} ^s {\mathit{BValue}(p_j, Q_{pj})}
\end{equation}
\label{def:score}
\end{definition}
\vspace*{-2mm}

\begin{definition}(\textbf{Optimal Sharing Plan})
Let $\mathbb{P}_{\mathit{val}}$ be the set of all valid sharing plans. $\mathcal{P}_{\mathit{opt}} \in \mathbb{P}_{\mathit{val}}$ is an {\it optimal} sharing plan if $\nexists \mathcal{P} \in \mathbb{P}_{\mathit{val}}$ with $\mathit{Score}(\mathcal{P}) > \mathit{Score}(\mathcal{P}_{\mathit{opt}})$.
\label{def:opt-sharing-plan}
\end{definition}

\begin{example}
Given the workload in Figure~\ref{fig:queries}, the plan $\mathcal{P}=\{(p_2,\{q_3,q_4\}); (p_4,\{q_2,q_4\})\}$ is valid. Its sharing candidates are not in conflict since the patterns $p_2=(\mathit{ParkAve},\mathit{OakSt})$ and $p_4=(\mathit{MainSt},\mathit{WestSt})$ do not overlap.
However, $\mathcal{P}$ is not an optimal plan because $\mathit{Score}(\mathcal{P})=24$ is not maximal among all valid plans. 
Indeed, another valid plan $\{(p_1,\{q_1,q_2,q_3,q_4\})\}$ has higher score 25.
\end{example}

%----------------------------------------------------------------------
\begin{definition}(\textbf{\app\ Graph})
Let $S$ be the set of sharable patterns in a workload $Q$. The \textit{\app\ graph} $G=$ $(V,E)$ has a set of weighted vertices $V$ and a set of undirected edges $E$. 
Each vertex $v \in V$ represents a sharing candidate $(p,Q_p)$ where $p \in S$ is a pattern and $Q_p \subseteq Q$ is the set of all queries containing $p$.
Each vertex is assigned a weight $\mathit{BValue}(p,Q_p)>0$ that corresponds to the benefit value of $(p,Q_p)$ (Equation~\ref{eq:bvalue}).
Each edge $(v,u) \in E$ represents a sharing conflict between the candidates $v, u \in V$. 
\label{def:graph}
\end{definition}

\begin{example}
Figure~\ref{fig:graph} shows the \app\ graph for the traffic monitoring workload in Figure~\ref{fig:queries} and Table~\ref{tab:candidates}. 
\end{example}

%----------------------------------------------------------------------
%\vspace*{-2mm}
\begin{algorithm}[t]
\begin{algorithmic}[1]
\Require A hash table $H$ mapping each sharable pattern $p$ to a list of queries $Q_p$ that contain $p$
\Ensure \app\ graph $G=(V,E)$
\State $V \leftarrow \emptyset;\ E \leftarrow \emptyset;\ G \leftarrow (V,E)$
\ForAll {$p$ in $H$} $Q_p \leftarrow H.\mathit{get}(p)$
    \If {$\mathit{BValue}(p,Q_p)>0$ \textbf{and} $Q_p.\mathit{size} > 1$} 
    	\State $v \leftarrow (p,Q_p);\ v.\mathit{weight} \leftarrow \mathit{BValue}(p,Q_p))$
        \State $V \leftarrow V \cup v$        
   		 \ForAll {$u$ in $V$}
       		\If {$v$ and $u$ are in sharing conflict}
    		\State $E \leftarrow E \cup (v,u)$
			\EndIf
		\EndFor
    \EndIf
\EndFor
\State \Return $G$
\end{algorithmic}
\caption{\app\ graph construction algorithm}
\label{algo:construction}
\end{algorithm}
%\vspace*{-2mm}

\textbf{\app\ Graph Construction Algorithm} consumes a hash table $H$ that maps each sharable pattern $p$ to the list of queries $Q$ that contain $p$. 
If a pattern $p$ is beneficial to be shared by at least two queries, the vertex $v=(p, Q_p)$ with weight $\mathit{BValue}(p,Q_p)$ is inserted into the graph (Lines~3--5 in Algorithm~\ref{algo:construction}). Non-beneficial candidates are omitted.
The edges representing the sharing conflicts between $v$ and other vertices in the graph are inserted (Lines~6--8). The graph is returned at the end (Line~9).

\textbf{Data Structures}.
We implement the \app\ graph as an adjacency list to efficiently retrieve the neighbors of a vertex $v$, i.e., identify the sharing conflicts of $v$.
Each vertex stores 
a sharable pattern $p$, 
a list of queries $Q_p$ that contain $p$, and
the benefit value of the sharing candidate $(p,Q_p)$.
The position of a query $q$ in the list $Q_p$ corresponds to the identifier of $q$. Thus, we can conclude whether two candidates are in conflict in linear time in the maximal number of queries $|Q_p|$ containing a sharable pattern $p$, i.e., $O(|Q_p|)$.

\textbf{Complexity Analysis}.
The time complexity is determined by 
the number of sharable patterns $|H|$,
the number of sharing candidates $|V|$, and 
the maximal number of queries $|Q_p|$ containing a pattern.
In total, $O(|H| |V| |Q_p|)$.
The space complexity corresponds to the size of the graph, i.e., $\Theta(|V| |Q_p| + |E|)$.

\section{Sharing Candidate Pruning}
\label{sec:pruning}

Since the search space for an optimal plan is exponential in the number of  candidates (Section~\ref{sec:finder}), we prune two classes of candidates from a \app\ graph.
One, \textit{conflict-ridden candidates} are guaranteed not to be in the optimal plan because their benefit values are too low to counterbalance the loss of benefit from the sharing opportunities they exclude. 
Two, \textit{conflict-free candidates} are guaranteed to 
be in the optimal plan since they do not prevent any other sharing opportunities.

%%%%%%%%%%%%%%%%%%%%%
\textbf{Conflict-Ridden Candidates}.
We now map our MESA problem to the problem of finding a Maximum Weight Independent Set (MWIS) in a graph, which is known to be NP-hard~\cite{karp-reducibility}. The greedy algorithm GWMIN~\cite{Sakai2003} for MWIS does not always return a high-quality plan as confirmed by our experiments in Section~\ref{exp:optimizer}. However, its guaranteed minimal weight can be used to prune conflict-ridden candidates.

\begin{definition}(\textbf{Maximum Weight Independent Set})
Let $G=(V,E)$ be a graph with a set of weighted vertices $V$ and a set of edges $E$. For a set of vertices $V'\subseteq V$, we denote the sum of the weights of the vertices in $V'$ as $\mathit{Weight}(V')$.
$\mathit{IS} \subseteq V$ is an {\it independent set} of $G$ if for any vertices $v_i,v_j \in \mathit{IS}$, $(v_i,v_j) \notin E$ holds.
Let $S_{\mathit{IS}}$ be the set of all independent sets of $G$. $\mathit{IS} \in S_{\mathit{IS}}$ is a {\it maximum weight independent set} of $G$ if $\nexists \mathit{IS'} \in S_{\mathit{IS}}$ with $\mathit{Weight}(\mathit{IS'}) > \mathit{Weight}(\mathit{IS})$.
\label{def:MWIS}
\end{definition}

\begin{lemma}
Let $Q$ be a query workload, $G$ be the \app\ graph for $Q$, and $\mathcal{P}_{\mathit{opt}}$ be an optimal sharing plan for $Q$.
Then, $\mathcal{P}_{\mathit{opt}}$ is an MWIS of $G$.
\label{lemma:mapping}
\end{lemma}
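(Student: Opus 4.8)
The plan is to establish a bijection between valid sharing plans for $Q$ and independent sets of the \app\ graph $G$, and to show that this bijection is weight-preserving; the lemma then follows immediately because an optimal sharing plan is, by definition, a valid plan of maximal score, which under the bijection corresponds to an independent set of maximal weight, i.e.\ an MWIS.

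First I would unwind the definitions to pin down what the vertices of $G$ are. By Definition~\ref{def:graph}, the vertex set $V$ is exactly $\{(p,Q_p) : p \in S,\ \mathit{BValue}(p,Q_p) > 0\}$, where $Q_p$ is the set of \emph{all} queries containing $p$. So every vertex of $G$ is a beneficial sharing candidate, and conversely every beneficial sharing candidate of the workload appears as a vertex (this uses Assumption~(1), that a pattern is shared among all queries containing it, so there is a unique candidate per sharable pattern). Hence any sharing plan $\mathcal{P}$ that consists only of beneficial candidates is precisely a subset of $V$. Since a non-beneficial candidate can always be dropped from a plan without decreasing its score (its $\mathit{BValue}$ is $\le 0$), it suffices to reason about plans $\mathcal{P} \subseteq V$ when searching for an optimal one.

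Next I would match up the two notions of ``no conflict.'' By Definition~\ref{def:graph}, $(v,u) \in E$ iff $v$ and $u$ are in sharing conflict (Definition~\ref{def:sharing-conflict}). By Definition~\ref{def:valid}, $\mathcal{P} \subseteq V$ is valid iff no two of its candidates are in conflict, i.e.\ iff for all $v,u \in \mathcal{P}$ we have $(v,u) \notin E$ — which is exactly the definition of an independent set of $G$ (Definition~\ref{def:MWIS}). So $\mathcal{P} \subseteq V$ is a valid sharing plan iff $\mathcal{P}$ is an independent set of $G$. Moreover the score matches the weight: by Definition~\ref{def:score}, $\mathit{Score}(\mathcal{P}) = \sum_{(p_j,Q_{pj}) \in \mathcal{P}} \mathit{BValue}(p_j,Q_{pj})$, and by Definition~\ref{def:graph} each vertex $v = (p_j,Q_{pj})$ has weight $\mathit{BValue}(p_j,Q_{pj})$, so $\mathit{Score}(\mathcal{P}) = \mathit{Weight}(\mathcal{P})$.

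Finally I would close the argument. Let $\mathcal{P}_{\mathit{opt}}$ be an optimal sharing plan. After discarding its non-beneficial candidates (which cannot decrease the score and hence preserves optimality), we may assume $\mathcal{P}_{\mathit{opt}} \subseteq V$, so $\mathcal{P}_{\mathit{opt}}$ is an independent set of $G$ with $\mathit{Weight}(\mathcal{P}_{\mathit{opt}}) = \mathit{Score}(\mathcal{P}_{\mathit{opt}})$. Suppose for contradiction it were not an MWIS: then some independent set $\mathit{IS}$ of $G$ has $\mathit{Weight}(\mathit{IS}) > \mathit{Weight}(\mathcal{P}_{\mathit{opt}})$. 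But $\mathit{IS} \subseteq V$ is then a valid sharing plan with $\mathit{Score}(\mathit{IS}) = \mathit{Weight}(\mathit{IS}) > \mathit{Score}(\mathcal{P}_{\mathit{opt}})$, contradicting the optimality of $\mathcal{P}_{\mathit{opt}}$ (Definition~\ref{def:opt-sharing-plan}). Hence $\mathcal{P}_{\mathit{opt}}$ is an MWIS of $G$.

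The only subtle point — and the step I would be most careful about — is the reduction to plans contained in $V$: I need the claim that dropping a non-beneficial candidate from any valid plan yields a valid plan of no smaller score. Validity is immediate since removing candidates cannot create a conflict, and the score does not decrease because the removed candidate contributed $\mathit{BValue} \le 0$. This ensures that restricting attention to subsets of $V$ loses no optimal solution, so the bijection between valid plans in $V$ and independent sets of $G$ genuinely transports optimality. The rest is a routine unfolding of Definitions~\ref{def:sharing-conflict}--\ref{def:MWIS}.
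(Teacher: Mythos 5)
Your proposal is correct and follows essentially the same route as the paper's proof: unfold Definitions~\ref{def:valid}--\ref{def:MWIS} to identify valid plans with independent sets of $G$ and $\mathit{Score}$ with $\mathit{Weight}$, then transfer maximality of score to maximality of weight. The only difference is cosmetic --- you argue by contradiction and explicitly justify restricting attention to beneficial candidates (plans contained in $V$), a point the paper's shorter proof leaves implicit.
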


\begin{proof} 
By Definitions~\ref{def:valid} and \ref{def:opt-sharing-plan}, $\mathcal{P}_{\mathit{opt}}$ is valid, i.e., contains no conflicting sharing candidates. By Definition~\ref{def:graph}, no vertices in $\mathcal{P}_{\mathit{opt}}$ are connected by an edge in $G$. By Definition~\ref{def:MWIS}, $\mathcal{P}_{\mathit{opt}}$ is an independent set of $G$.
By Definition~\ref{def:opt-sharing-plan}, $\mathcal{P}_{\mathit{opt}}$ has the maximum score among all valid plans. By Definition~\ref{def:graph}, $\mathcal{P}_{\mathit{opt}}$ has the maximum weight among all independent sets of $G$. By Definition~\ref{def:MWIS}, $\mathcal{P}_{\mathit{opt}}$ is an MWIS of $G$.
\end{proof} 

GWMIN returns an independent set $\mathit{IS}$ with weight:
\begin{equation}
\mathit{Weight}(\mathit{IS}) \geq \sum _{u\in V} \frac{\mathit{weight}(u)}{\mathit{degree}(u)+1}
\label{eq:guaranteed_weight}
\end{equation} 

To safely prune a conflict-ridden candidate $v$, we define the maximal score of a plan containing $v$, denoted $\mathit{Score}_{\mathit{max}}(v)$. In best case, a plan containing $v$ includes all other candidates that are not in conflict with $v$. Thus, $\mathit{Score}_{\mathit{max}}(v)$ corresponds to the summation of benefit values of all sharing candidates that are not in conflict with $v$. 

\begin{definition}(\textbf{Maximal Score of a Plan Containing a Sharing Candidate})
Let $v \in V$ be a sharing candidate in a \app\ graph $G=(V,E)$ and $\mathcal{N}(v) \subseteq V$ be the set of candidates that are in conflict with $v$. 
The \textit{maximal score of a sharing plan containing} $v$ is defined as follows:
\begin{equation}
\mathit{Score}_{\mathit{max}}(v) = \sum _{u \in V \setminus \mathcal{N}(v)} \mathit{BValue}(u)
\end{equation}
\label{def:maxscore}
\end{definition}
\vspace*{-2mm}

\begin{lemma}
For a valid sharing plan $\mathcal{P}$ and a sharing candidate $v \in \mathcal{P}$, $\mathit{Score}(\mathcal{P}) \leq \mathit{Score}_{\mathit{max}}(v)$ holds.
\label{lemma:maxscore}
\end{lemma}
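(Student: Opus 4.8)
The plan is to prove the inequality $\mathit{Score}(\mathcal{P}) \leq \mathit{Score}_{\mathit{max}}(v)$ by showing that every sharing candidate appearing in the valid plan $\mathcal{P}$ must lie in the set $V \setminus \mathcal{N}(v)$ over which $\mathit{Score}_{\mathit{max}}(v)$ sums, and then invoking non-negativity (in fact positivity) of the individual benefit values to conclude that the full sum over $V \setminus \mathcal{N}(v)$ dominates the partial sum over $\mathcal{P}$.

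First I would unpack the hypotheses. Since $\mathcal{P}$ is valid (Definition~\ref{def:valid}), it contains no two candidates that are in conflict with each other; in particular, since $v \in \mathcal{P}$, no candidate $u \in \mathcal{P}$ can be in conflict with $v$. Hence $\mathcal{P} \cap \mathcal{N}(v) = \emptyset$, i.e., $\mathcal{P} \subseteq V \setminus \mathcal{N}(v)$. (I should also note $\mathcal{P} \subseteq V$ because every candidate in a sharing plan over the workload $Q$ is a vertex of the \app\ graph $G$ — strictly, beneficial candidates are vertices; if $\mathcal{P}$ is allowed to contain a non-beneficial candidate this contributes $\mathit{BValue} \le 0$ and only helps the inequality, but under Definition~\ref{def:graph} the natural reading is that plan candidates are graph vertices.)

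Next I would write $\mathit{Score}(\mathcal{P}) = \sum_{u \in \mathcal{P}} \mathit{BValue}(u)$ by Definition~\ref{def:score}, and $\mathit{Score}_{\mathit{max}}(v) = \sum_{u \in V \setminus \mathcal{N}(v)} \mathit{BValue}(u)$ by Definition~\ref{def:maxscore}. Because every vertex weight in $G$ satisfies $\mathit{BValue}(u) > 0$ (Definition~\ref{def:graph}), and $\mathcal{P} \subseteq V \setminus \mathcal{N}(v)$, the sum over the superset $V \setminus \mathcal{N}(v)$ is at least the sum over $\mathcal{P}$:
\begin{equation}
\mathit{Score}(\mathcal{P}) = \sum_{u \in \mathcal{P}} \mathit{BValue}(u) \leq \sum_{u \in V \setminus \mathcal{N}(v)} \mathit{BValue}(u) = \mathit{Score}_{\mathit{max}}(v).
\end{equation}

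The proof is short and the only genuine subtlety — the thing I would treat as the ``hard part'' — is justifying $\mathcal{P} \subseteq V \setminus \mathcal{N}(v)$ cleanly: one must be careful that $\mathcal{N}(v)$ is exactly the set of conflict-neighbors of $v$ in $G$ (Definition~\ref{def:maxscore}), that conflict is the relation encoded by edges of $G$ (Definition~\ref{def:graph}), and that validity of $\mathcal{P}$ forbids precisely edges among its members (Definition~\ref{def:valid}); chaining these three definitions is what does all the work. Everything after that is just monotonicity of a sum of positive terms under set inclusion.
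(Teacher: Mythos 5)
Your proof is correct and follows essentially the same route as the paper's: both establish $\mathcal{P} \subseteq V \setminus \mathcal{N}(v)$ from validity of $\mathcal{P}$ together with the fact that $\mathcal{N}(v)$ consists exactly of the candidates in conflict with $v$, and then conclude by positivity of the $\mathit{BValue}$ weights, so the larger sum dominates. Your added remark about why $\mathcal{P} \subseteq V$ (plan candidates being graph vertices) is a reasonable clarification but does not change the argument.
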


\begin{proof}
Let $G=(V,E)$ be the \app\ graph such that $\mathcal{P} \subseteq V$ is an independent set of $G$ and $v \in \mathcal{P}$. By Definition~\ref{def:valid}, $\mathcal{P}$ contains no conflicting candidates. 
By Definition~\ref{def:graph}, all vertices in $\mathcal{N}(v)$ are in conflict with $v$ and thus are not in $\mathcal{P}$. 
Since $\mathcal{P}$ may need to remove additional vertices to avoid other conflicts, $\mathcal{P} \subseteq V \setminus \mathcal{N}(v)$. 
By Definition~\ref{def:maxscore}, $\mathit{Score}_{\mathit{max}}(v)$ is the sum of BValues of all candidates in $V \setminus \mathcal{N}(v)$. 
Since all BValues of vertices in $V$ are positive (Section~\ref{sec:bvalue}),  $\mathcal{P} \subseteq V \setminus \mathcal{N}(v)$ implies $\mathit{Score}(\mathcal{P}) \leq \mathit{Score}_{\mathit{max}}(v)$.
\end{proof}

%---------------------------------------------------
\begin{definition}(\textbf{Conflict-Ridden Sharing Candidate})
Let $G=(V,E)$ be a \app\ graph. 
A sharing candidate $v \in V$ is \textit{conflict-ridden} if the maximal score of a sharing plan containing $v$ is lower than the guaranteed weight of GWMIN.
\begin{equation}
\mathit{Score}_{\mathit{max}}(v)<\sum _{u \in V} \frac{\mathit{BValue}(u)}{\mathit{degree}(u)+1}
\end{equation}
\label{def:conflict-ridden}
\end{definition}
\vspace*{-2mm}

\textbf{\textit{Conflict-Ridden Candidate Pruning}}. 
All conflict-ridden candidates are pruned from the \app\ graph without sacrificing the optimality of the resulting sharing plan.

\begin{example}
The guaranteed weight on the graph in Figure~\ref{fig:graph} is 
$\frac{25}{6} + \frac{9}{4} + \frac{12}{5} + \frac{15}{4} + \frac{20}{5} + \frac{8}{2} + \frac{18}{1} \approx 38.57$. 
Since $\mathit{Score}_{\mathit{max}}(p_3,\{q_3,q_4\}) = 
\mathit{BValue}(p_3,\{q_3,q_4\}) + 
\mathit{BValue}(p_6,\{q_1,q_5\}) + 
\mathit{BValue}(p_7,\{q_6,q_7\}) = 38 < 38.57$, an optimal sharing plan cannot contain $(p_3,\{q_3,q_4\})$. Thus, this candidate and its conflicts can be pruned.
\label{ex:conflict-ridden}
\end{example}

%-------------------------------------------------------------
\textbf{Conflict-Free Candidates} do not exclude any other sharing opportunities and increment the score of a plan by their benefit values. Such candidates can be directly added to an optimal plan and removed from further analysis. 

\begin{definition}(\textbf{Conflict-Free Sharing Candidate})
A sharing candidate $v \in V$ in a \app\ graph $G=(V,E)$ is \textit{conflict-free} if $\not\exists u \in V$ with $(v,u) \in E$.
\label{def:conflict-free}
\end{definition}

\begin{example}
The conflict-free candidate $(p_7,\{q_6,q_7\})$ in Figure~\ref{fig:graph} increments the score of a plan by its benefit 18.
\label{ex:conflict-free}
\end{example}

\begin{figure*}[t]
\centering
\includegraphics[width=0.9\textwidth]{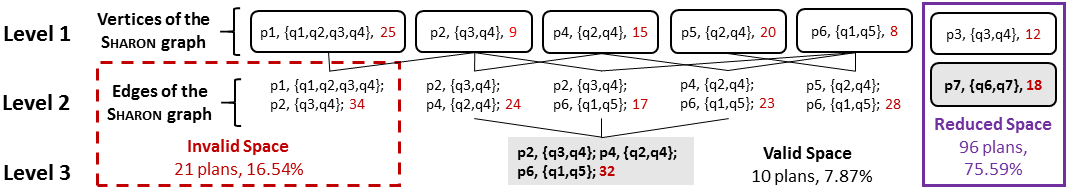}
\caption{Search space of the sharing plan finder algorithm}
\label{fig:ssp}
\end{figure*}

%%%%%%%%%%%%%%%%%%%%%%%%%%%%%%%%%%%%%%%%%%%%%%%%%%%%%%%%%%%%%%%%%%%%%%%%%%%
\textbf{\app\ Graph Reduction Algorithm} (Algorithm~\ref{algo:reduction}) consumes a \app\ graph $G$ and the guaranteed weight of GWMIN. It removes all conflict-ridden or conflict-free candidates from the graph $G$. The algorithm returns the reduced graph and the set of conflict-free candidates.

%%%%%%%%%%%%%%%%%%%%%%%%%%%%%%%%%%%%%%%%%%%%%%%%%%%%%%%%%%%%%
\begin{algorithm}[t]
\begin{algorithmic}[1]
\Require \app\ graph $G$, guaranteed weight $\mathit{min}$ of GWMIN
\Ensure Reduced graph $G$, conflict-free candidates $F$
\State $F \leftarrow \emptyset$
\While {$G$ can be reduced} 
\ForAll {$v$ in $V$}
	\If {degree($v$) = 0}
    	\State $F \leftarrow F \cup v$; $G$.remove($v$)
    \ElsIf {$\mathit{Score}_{\mathit{max}}(v) < \mathit{min}$}
    	\State $G$.remove($v$)
    \EndIf
\EndFor
\EndWhile
\State \Return $G, F$
\end{algorithmic}
\caption{\app\ graph reduction algorithm}
\label{algo:reduction}
\end{algorithm}

\textbf{Complexity Analysis}.
The time complexity is determined by the nested loops that iterate $O(|V|)$ and $\Theta(|V|)$ times respectively. The time complexity of removing a candidate $v$ from the graph in Line~7 is $O(|E|)$ since all conflicts of $v$ are also deleted. Thus, the time complexity is quadratic $O(|V|^2|E|)$.
The space complexity is determined by the size of the graph $G$ and the set $F$. Since $|F| \leq |V|$, the space costs are linear, i.e., $O(|V|+|E|)$.

\begin{example}
Figure~\ref{fig:ssp} depicts the search space for an optimal plan for our running example. Since the conflict-ridden candidate $(p_3,\{q_3,q_4\})$ is pruned (Example~\ref{ex:conflict-ridden}), while the conflict-free candidate $(p_7,\{q_6,q_7\})$ is added to the optimal plan (Example~\ref{ex:conflict-free}), the search space is reduced by $2^{7} - 2^{5} = 96$ plans. This \textbf{\textit{reduced}} space is indicated by a solid frame in Figure~\ref{fig:ssp}. It corresponds to 75.59\% of the search space.
\label{ex:reduced}
\end{example}

\section{Sharing Plan Finder}
\label{sec:finder}

Based on the reduced \app\ graph, we now propose the optimal sharing plan finder. In addition to the non-beneficial and conflict-ridden candidate pruning principles, we define invalid branch pruning. It cuts off those branches of the search space that contain only invalid plans early on.

%%%%%%%%%%%%%%%%%
\textbf{Search Space for an Optimal Sharing Plan}. 
The parent-child relationships between sharing plans are defined next.

\begin{definition}(\textbf{Parent-Child Relationship between Sharing Plans})
Let $\mathcal{P}$ and $\mathcal{P}'$ be sharing plans. If $\mathcal{P} \subset \mathcal{P}'$, then $\mathcal{P}$ is an \textit{ancestor} of $\mathcal{P}'$ ($\mathcal{P}'$ is a \textit{descendant} of $\mathcal{P}$). If $|\mathcal{P}| = |\mathcal{P}'| - 1$, then $\mathcal{P}$ is a \textit{parent} of $\mathcal{P}'$ ($\mathcal{P}'$ is a \textit{child} of $\mathcal{P}$).
\label{def:parent-child}
\end{definition}

The search space has a lattice shape (Figure~\ref{fig:ssp}). The plans in Level~1 correspond to the vertices in Figure~\ref{fig:graph}. Level~$s$ contains sharing plans of size $s$. 
The \textit{size of the search space} is exponential in the number of candidates, denoted $|V|$. It is computed as the sum of the number of plans at each level:  
\begin{equation}
\sum _{s=0}^{|V|} \binom{|V|}{s} = 2^{|V|}
\label{eq:complexity}
\end{equation}

%%%%%%%%%%%%%%%%%%%%%%%%
\begin{lemma}
$Score(\mathcal{P}')>Score(\mathcal{P})$ if $\mathcal{P}$ is a parent of $\mathcal{P}'$.
\end{lemma}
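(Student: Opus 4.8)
The plan is to prove this directly from the definitions of \emph{parent} (Definition~\ref{def:parent-child}), \emph{score} (Definition~\ref{def:score}), and the fact that every vertex of the \app\ graph has strictly positive weight. Recall that in a valid sharing plan every candidate is a vertex $v \in V$, and by Definition~\ref{def:graph} each such vertex carries weight $\mathit{BValue}(p,Q_p) > 0$. So the key observation is simply that the score of a plan is a sum of strictly positive terms, and adding a term can only increase the total.

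Concretely, first I would unpack the hypothesis: if $\mathcal{P}$ is a parent of $\mathcal{P}'$, then by Definition~\ref{def:parent-child} we have $\mathcal{P} \subset \mathcal{P}'$ and $|\mathcal{P}'| = |\mathcal{P}| + 1$, so $\mathcal{P}' = \mathcal{P} \cup \{v\}$ for a single sharing candidate $v \notin \mathcal{P}$. Next I would write out the scores using Definition~\ref{def:score}:
\begin{equation}
\mathit{Score}(\mathcal{P}') = \sum_{(p_j,Q_{pj}) \in \mathcal{P}'} \mathit{BValue}(p_j,Q_{pj}) = \mathit{Score}(\mathcal{P}) + \mathit{BValue}(v).
\end{equation}
Finally, since $v$ is a vertex of the \app\ graph, its weight $\mathit{BValue}(v)$ is strictly positive by Definition~\ref{def:graph} (non-beneficial candidates with $\mathit{BValue} \leq 0$ are never inserted), so $\mathit{Score}(\mathcal{P}') = \mathit{Score}(\mathcal{P}) + \mathit{BValue}(v) > \mathit{Score}(\mathcal{P})$, which is exactly the claim.

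There is essentially no hard part here — the statement is a one-line monotonicity observation. The only thing to be slightly careful about is the quantifier context: the lemma is implicitly about sharing plans drawn from a fixed \app\ graph $G$ (so that "candidate" means "vertex of $G$" and the positivity of $\mathit{BValue}$ applies); if one instead allowed arbitrary sharing candidates including non-beneficial ones, the inequality could fail. So in the proof I would make explicit that $\mathcal{P}$ and $\mathcal{P}'$ are sub(multi)sets of the vertex set of the \app\ graph, which is the standing setting of this section, and then the strict positivity of each vertex weight closes the argument immediately.
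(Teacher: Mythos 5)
Your proof is correct and follows essentially the same route as the paper: decompose $\mathcal{P}' = \mathcal{P} \cup \{v\}$ via the parent definition, note that by Definition~\ref{def:graph} only candidates with $\mathit{BValue}>0$ appear as vertices, and conclude the strict inequality from the score being a sum of positive terms. Your remark about restricting candidates to vertices of the \app\ graph is exactly the implicit assumption the paper relies on.
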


\begin{proof}
By Definition~\ref{def:parent-child}, $\mathcal{P} \subset \mathcal{P}'$ and $|\mathcal{P}| = |\mathcal{P}'| - 1$. 
Let $\mathcal{P}' \setminus \mathcal{P} = (p,Q_p)$.
By Definition~\ref{def:graph}, only a candidate $(p,Q_p)$ with $\mathit{BValue}(p,Q_p)>0$ is included into a \app\ graph.
Thus, $(p,Q_p)$ increases the score of $\mathcal{P}'$ compared to $\mathcal{P}$. 
\end{proof}

%%%%%%%%%%%%%%%%%%%%%%%%%
A naive plan finder considers all combinations of candidates and keeps track of a valid plan with the maximal score seen so far. However, this solution constructs many invalid plans that are subsequently discarded. 
To avoid such exhaustive search, we prove the following properties of the search space. 

\begin{lemma}
All descendants of an invalid plan are invalid.
\label{lemma:invalid}
\end{lemma}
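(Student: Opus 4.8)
The plan is to prove the contrapositive-free, direct statement by induction on the size difference between an invalid plan and its descendant, reducing everything to the parent-child step. Recall that a sharing plan is simply a set of sharing candidates (Definition~\ref{def:valid}), and invalidity means the set contains two candidates that are in sharing conflict, i.e., connected by an edge in the \app\ graph (Definitions~\ref{def:sharing-conflict} and \ref{def:graph}). A descendant $\mathcal{P}'$ of $\mathcal{P}$ satisfies $\mathcal{P} \subset \mathcal{P}'$ by Definition~\ref{def:parent-child}.

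First I would establish the one-step claim: if $\mathcal{P}$ is an invalid plan and $\mathcal{P} \subseteq \mathcal{P}'$, then $\mathcal{P}'$ is invalid. This is immediate from the definitions: since $\mathcal{P}$ is invalid, there exist candidates $v, u \in \mathcal{P}$ with $(v,u) \in E$. Because $\mathcal{P} \subseteq \mathcal{P}'$, we also have $v, u \in \mathcal{P}'$, so $\mathcal{P}'$ contains two candidates in sharing conflict and is therefore invalid by Definition~\ref{def:valid}. Strictly speaking this already covers all descendants at once, since every descendant $\mathcal{P}'$ of $\mathcal{P}$ satisfies $\mathcal{P} \subset \mathcal{P}' \subseteq \mathcal{P}'$; but to match the lattice/parent-child framing of the surrounding text, I would phrase it as an induction on $|\mathcal{P}'| - |\mathcal{P}|$, with the base case $|\mathcal{P}'| - |\mathcal{P}| = 1$ (the parent-child step) and the inductive step passing through an intermediate plan $\mathcal{P}''$ with $\mathcal{P} \subset \mathcal{P}'' \subset \mathcal{P}'$ and $|\mathcal{P}''| = |\mathcal{P}'| - 1$.

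There is essentially no hard part here: the statement follows because the conflict witnesses (the offending pair of vertices) are preserved under taking supersets — adding more candidates can never remove an existing conflict. The only thing to be careful about is invoking the right definitions in the right order (conflict $\Rightarrow$ edge in $G$ $\Rightarrow$ both endpoints present in any superset $\Rightarrow$ superset invalid), and making sure the phrasing aligns with the lattice search space in Figure~\ref{fig:ssp} so that the lemma can be used immediately afterward to justify pruning entire subtrees rooted at invalid plans.

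\begin{proof}

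Let $\mathcal{P}$ be an invalid sharing plan and let $G = (V,E)$ be the \app\ graph. By Definition~\ref{def:valid}, $\mathcal{P}$ contains two candidates $v,u \in \mathcal{P}$ that are in sharing conflict; by Definition~\ref{def:graph}, $(v,u) \in E$.

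Let $\mathcal{P}'$ be any descendant of $\mathcal{P}$. By Definition~\ref{def:parent-child}, $\mathcal{P} \subset \mathcal{P}'$, so $v \in \mathcal{P}'$ and $u \in \mathcal{P}'$. Thus $\mathcal{P}'$ contains the two conflicting candidates $v$ and $u$, and hence $\mathcal{P}'$ is invalid by Definition~\ref{def:valid}.

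Equivalently, one may argue by induction on $|\mathcal{P}'| - |\mathcal{P}| \geq 1$. If $|\mathcal{P}'| - |\mathcal{P}| = 1$, then $\mathcal{P}$ is a parent of $\mathcal{P}'$ and the argument above applies. If $|\mathcal{P}'| - |\mathcal{P}| > 1$, pick $\mathcal{P}''$ with $\mathcal{P} \subset \mathcal{P}'' \subset \mathcal{P}'$ and $|\mathcal{P}''| = |\mathcal{P}'| - 1$; by the base case $\mathcal{P}''$ is invalid, and by the induction hypothesis applied to the invalid plan $\mathcal{P}''$ and its descendant $\mathcal{P}'$, the plan $\mathcal{P}'$ is invalid.

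\end{proof}
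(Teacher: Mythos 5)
Your proof is correct and follows essentially the same route as the paper: since $\mathcal{P} \subset \mathcal{P}'$ by Definition~\ref{def:parent-child}, the descendant inherits the conflicting pair of candidates from $\mathcal{P}$ and is therefore invalid. The added induction on $|\mathcal{P}'| - |\mathcal{P}|$ is harmless but unnecessary, as your direct superset argument already covers all descendants.
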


\begin{proof}
Let $\mathcal{P}$ be an invalid sharing plan and $\mathcal{P}_d$ be its descendant. By Definition~\ref{def:parent-child}, $\mathcal{P} \subset \mathcal{P}_d$. Thus, $\mathcal{P}_d$ ``inherits" all sharing conflicts from $\mathcal{P}$ which makes $\mathcal{P}_d$ invalid.
\end{proof}

\textbf{\textit{Invalid Branch Pruning}}.
Invalid plans of size two correspond to edges of a \app\ graph (Figures~\ref{fig:graph} and \ref{fig:ssp}). Thus, all descendants of invalid plans of size two can be safely pruned. Our plan finder cuts off invalid branches at their roots.

\begin{example}
In Figure~\ref{fig:ssp}, only 7.87\% of the search space is \textbf{\textit{valid}}. It consists of 10 plans. This valid space is traversed to find the optimal plan $\{(p_2,\{q_3,q_4\});\ (p_4,\{q_2,q_4\});\ (p_6,$ $\{q_1,q_5\});\ (p_7,\{q_6,q_7\})\}$ highlighted by a darker background.

In Figure~\ref{fig:ssp}, 16.54\% of the search space is \textbf{\textit{invalid}}. The invalid space consists of 21 plans = $2^{5}$ not reduced plans -- 10 valid plans -- 1 empty plan. The invalid space is indicated by the dashed frame. It is pruned by our plan finder.
%
% The rest of the search space was \textbf{\textit{reduced}} by pruning conflict-ridden and conflict-free candidates in Example~\ref{ex:reduced}.
%
\end{example}

%%%%%%%%%%%%%%%%%%%%%%%% 
\textbf{Valid Search Space Traversal}.
A plan of size one is valid by Definition~\ref{def:valid}. A plan of size two $\{v_1,v_2\}$ is valid if there is no edge $(v_1,v_2)$ in the \app\ graph. Validity of a larger plan is determined as described next. 

\begin{lemma}
A sharing plan $\mathcal{P}$, $|\mathcal{P}|>2$, is valid if and only if all its parents are valid.
\label{lemma:valid}
\end{lemma}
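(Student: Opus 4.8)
The plan is to prove the biconditional by splitting it into its two directions, using the transitivity of the conflict/subset structure and Lemma~\ref{lemma:invalid} as the main lever. Recall that a plan $\mathcal{P}$ with $|\mathcal{P}| > 2$ has exactly $|\mathcal{P}|$ parents, each obtained by removing one candidate from $\mathcal{P}$; every pair of distinct candidates of $\mathcal{P}$ lies together in at least one such parent (since $|\mathcal{P}| - 1 \geq 2$, removing a third candidate leaves both in place).

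For the forward direction ($\mathcal{P}$ valid $\Rightarrow$ all parents valid), I would argue by contraposition: if some parent $\mathcal{P}_{\mathit{par}}$ is invalid, then since $\mathcal{P}_{\mathit{par}} \subset \mathcal{P}$, the plan $\mathcal{P}$ is a descendant of an invalid plan, so by Lemma~\ref{lemma:invalid} $\mathcal{P}$ is invalid --- contradicting the hypothesis. For the backward direction (all parents valid $\Rightarrow$ $\mathcal{P}$ valid), I would again use contraposition: suppose $\mathcal{P}$ is invalid, so by Definition~\ref{def:valid} there are two candidates $v_i, v_j \in \mathcal{P}$ that are in a sharing conflict. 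Because $|\mathcal{P}| > 2$, pick any third candidate $v_k \in \mathcal{P} \setminus \{v_i, v_j\}$ and form the parent $\mathcal{P}_{\mathit{par}} = \mathcal{P} \setminus \{v_k\}$. Then $v_i, v_j \in \mathcal{P}_{\mathit{par}}$ and they are still in conflict, so $\mathcal{P}_{\mathit{par}}$ is invalid, contradicting the assumption that all parents are valid. Hence $\mathcal{P}$ must be valid.

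The only subtle point --- and the one I expect to be the main obstacle to state cleanly --- is the use of $|\mathcal{P}| > 2$ in the backward direction: it guarantees the existence of a third candidate $v_k$ distinct from the conflicting pair, so that the parent obtained by deleting $v_k$ genuinely still witnesses the conflict. This is exactly why the lemma is stated for $|\mathcal{P}| > 2$ rather than $|\mathcal{P}| \geq 2$; for $|\mathcal{P}| = 2$ the parents are singletons, which are always valid, so the "if" direction would fail. I would make sure the write-up flags this hypothesis at precisely the step where it is used, and otherwise the argument is a short, routine application of the inheritance-of-conflicts property already established in Lemma~\ref{lemma:invalid}.
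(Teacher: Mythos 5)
Your proof is correct and follows essentially the same route as the paper: the forward direction is the contrapositive of Lemma~\ref{lemma:invalid}, and the backward direction is proved by contraposition, exhibiting a parent that still contains the conflicting pair. The only difference is that you spell out explicitly why such a parent exists (the hypothesis $|\mathcal{P}|>2$ guarantees a third candidate to delete), a point the paper's proof leaves implicit.
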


\begin{proof}
``$\Rightarrow$" The proof follows directly from Lemma~\ref{lemma:invalid}.
%Let $\mathcal{P}$ be a valid plan and $\mathcal{P}_p$ be its parent. Assume $\mathcal{P}_p$ is invalid. By Lemma~\ref{lemma:invalid}, $\mathcal{P}$ is invalid which contradicts our assumption. Thus, $\mathcal{P}_p$ is valid.

``$\Leftarrow$" Let $\mathcal{P}$ contain a sharing conflict, say, between candidates $v$ and $u$. Then there exists a parent $\mathcal{P}_p$ of $\mathcal{P}$ that contains $v$ and $u$. Hence, $\mathcal{P}_p$ is invalid.
\end{proof}

By Definition~\ref{def:parent-child}, a plan of size $s$ has $s$ parents. Instead of accessing all parent plans to generate one new valid plan, we prove that only two parents and one ancestor of size two must be valid to guarantee validity of a sharing plan (similarly to Apriori candidate generation~\cite{apriori94VLDB}). 

\begin{lemma}
Let $G=(V,E)$ be a \app\ graph,
$\mathcal{P}_1$ and $\mathcal{P}_2$ be valid parents of $\mathcal{P},\, |\mathcal{P}|>2$.
For two candidates $v_1 = \mathcal{P}_1 \setminus \mathcal{P}_2$ and $v_2 = \mathcal{P}_2 \setminus \mathcal{P}_1$,
if $(v_1,v_2) \not\in E$, then $\mathcal{P}$ is valid.
\label{lemma:three}
\end{lemma}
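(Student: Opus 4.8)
The plan is to reduce validity of $\mathcal{P}$ to the already-assumed validity of its two parents together with the single non-edge hypothesis, via a short case analysis, mirroring the Apriori candidate-generation argument referenced just above the statement. The first step is to pin down the combinatorial picture. Since $\mathcal{P}_1$ and $\mathcal{P}_2$ are (distinct) parents of $\mathcal{P}$, Definition~\ref{def:parent-child} gives $\mathcal{P}_1 \subset \mathcal{P}$, $\mathcal{P}_2 \subset \mathcal{P}$, and $|\mathcal{P}_1| = |\mathcal{P}_2| = |\mathcal{P}| - 1$. Hence each $\mathcal{P}_i$ is obtained from $\mathcal{P}$ by deleting exactly one candidate; writing those deleted candidates out, one checks that $\mathcal{P}_1 = \mathcal{P} \setminus \{v_2\}$ and $\mathcal{P}_2 = \mathcal{P} \setminus \{v_1\}$, where $v_1 = \mathcal{P}_1 \setminus \mathcal{P}_2$ and $v_2 = \mathcal{P}_2 \setminus \mathcal{P}_1$ are single elements of $\mathcal{P}$ with $v_1 \neq v_2$ (this is where the implicit assumption $\mathcal{P}_1 \neq \mathcal{P}_2$ is used, and I would state it explicitly). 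In particular $\mathcal{P} = \mathcal{P}_1 \cup \mathcal{P}_2$, and every element of $\mathcal{P}$ other than $v_2$ lies in $\mathcal{P}_1$, while every element other than $v_1$ lies in $\mathcal{P}_2$.

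Next I would argue by contradiction. Suppose $\mathcal{P}$ is invalid; by Definition~\ref{def:valid} it contains two candidates $x \neq y$ with $(x,y) \in E$, i.e., $x$ and $y$ are in sharing conflict. Now split into cases according to how $\{x,y\}$ meets $\{v_1, v_2\}$. If $v_2 \notin \{x,y\}$, then $x,y \in \mathcal{P} \setminus \{v_2\} = \mathcal{P}_1$, so $\mathcal{P}_1$ contains a conflicting pair, contradicting the assumption that $\mathcal{P}_1$ is valid. Symmetrically, if $v_1 \notin \{x,y\}$, then $x,y \in \mathcal{P}_2$, contradicting validity of $\mathcal{P}_2$. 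The only remaining possibility is that $\{x,y\}$ contains $v_1$ and contains $v_2$; since $x \neq y$ and $v_1 \neq v_2$, this forces $\{x,y\} = \{v_1,v_2\}$, whence $(v_1,v_2) \in E$, contradicting the hypothesis $(v_1,v_2) \notin E$. All three cases are impossible, so $\mathcal{P}$ is valid.

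I do not anticipate a serious obstacle here: the content is entirely a finite set-theoretic case check. The one place to be careful is the bookkeeping in the first paragraph — verifying that $\mathcal{P}_1 \setminus \mathcal{P}_2$ really is the singleton $\{v_1\}$ and that $\mathcal{P}_1 = \mathcal{P}\setminus\{v_2\}$, rather than, say, confusing $v_1$ with $v_2$ — and making explicit the standing assumption that the two parents are distinct so that $v_1$ and $v_2$ are well-defined single candidates. Once that setup is correct, the three-case argument closes immediately, and the lemma will later justify generating a size-$s$ valid plan by joining two valid size-$(s{-}1)$ parents whose symmetric-difference vertices are non-adjacent in $G$, provided one also knows (from Lemma~\ref{lemma:valid}) that all size-two ancestors are edge-free.
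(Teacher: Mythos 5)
Your proof is correct and follows essentially the same route as the paper: both argue by contradiction that, since $\mathcal{P}_1$ and $\mathcal{P}_2$ are valid and $\mathcal{P}=\mathcal{P}_1\cup\mathcal{P}_2$, the only possible conflicting pair in $\mathcal{P}$ is $\{v_1,v_2\}$, which would force $(v_1,v_2)\in E$ against the hypothesis. Your write-up merely makes explicit what the paper compresses — the case analysis showing any other conflicting pair lies entirely in one parent, and the standing assumption $\mathcal{P}_1\neq\mathcal{P}_2$ so that $v_1$ and $v_2$ are well-defined distinct singletons.
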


\begin{proof}
Assume all the above conditions hold but $\mathcal{P}$ is invalid. Then $\mathcal{P}$ contains at least one pair of conflicting candidates. By Definition~\ref{def:parent-child}, $\mathcal{P} = \mathcal{P}_1 \cup \mathcal{P}_2$ and $\mathcal{P}$ has one additional candidate compared to $\mathcal{P}_1$ (or $\mathcal{P}_2$). Since $\mathcal{P}_1$ and $\mathcal{P}_2$ are valid, there can be only one pair of conflicting candidates $v_1$ and $v_2$ in $\mathcal{P}$ such that $v_1 = \mathcal{P}_1 \setminus \mathcal{P}_2$ and $v_2 = \mathcal{P}_2 \setminus \mathcal{P}_1$. By Definition~\ref{def:graph}, $(v_1,v_2) \in E$ which is a contradiction.
\end{proof}

%%%%%%%%%%%%%%%%%%%%%%%%%%%%%%%%%%%%%%%%%%%%%%%%%%%%%%%%
\begin{algorithm}[t]
\begin{algorithmic}[1]
\Require \app\ graph $G=(V,E)$, set of sharing plans of size $s$ $\mathit{Parents}=\{P_0,\ldots,P_{s-1}\}$
\Ensure Set of sharing plans of size $s+1$ $\mathit{Children}$
\State $\mathit{getNextLevel}(G,\mathit{Parents})$ \{
\State $\mathit{Children} \leftarrow \emptyset$
\ForAll {$(i=0; i<s; i\text{++})$}
 \ForAll {$(j=i+1; j<s; j\text{++})$}
  \If {$s=1$ \textbf{and} $(P_i.v_1, P_j.v_1)$ not in $E$}
   \State $\mathit{Children}$.add($P_i \cup P_j$)
  \EndIf 
  \If {$P_i.v_1=P_j.v_1,\ldots,P_i.v_{s-1}=P_j.v_{s-1}$ \textbf{and} $(P_i.v_s, P_j.v_s)$ not in $E$}
     \State $\mathit{Children}$.add($P_i \cup P_j$)
  \EndIf
 \EndFor
\EndFor
\State \Return $\mathit{Children}$ \}
\end{algorithmic}
\caption{Level generation algorithm}
\label{node-gen}
\end{algorithm}

\textbf{Level Generation Algorithm} consumes a \app\ graph $G$ and a set of sharing plans of size $s$, called $\mathit{Parents}$. It returns level $s+1$ of the search space, i.e., the set of all sharing plans of size $s+1$, called $\mathit{Children}$. 
Algorithm~\ref{node-gen} iterates through all pairs of parent plans of size $s$ (Lines~3--4). 
In the base case, the $\mathit{Parents}$ are the vertices of $G$ and the $\mathit{Children}$ are non-adjacent pairs of vertices (Lines~5--6).
In the inductive case, to generate a valid plan of size $s+1$, the algorithm identifies two plans of size $s$, $P_i$ and $P_j$, that begin with the same $s-1$ decisions. 
If the plan containing the last decisions of $P_i$ and $P_j$ (denoted $P_i.v_s$ and $P_j.v_s$) is valid, the plan $P_i \cup P_j$ is also valid (Lines~7--8).

\textbf{Complexity Analysis}.
The time complexity of Algorithm~\ref{node-gen} is determined by the number of plans at one level, namely, the binomial coefficient $\binom{|V|}{s}$ in Equation~\ref{eq:complexity}. Due to two nested loops, the time complexity is $O(\binom{|V|}{s}^2)$. The space complexity is also determined by the number of plans at one level, i.e., $O\binom{|V|}{s}$.

%----------------------------
\begin{figure}[t]
\centering
\begin{minipage}{0.15\columnwidth}
	\centering
    \includegraphics[width=1.0\columnwidth]{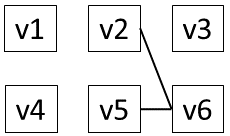}
    \caption{\app\ graph}
    \label{fig:node-gen-graph}
\end{minipage}
\hspace*{2mm}
\begin{minipage}{0.5\columnwidth}
	\centering
	\includegraphics[width=1.0\columnwidth]{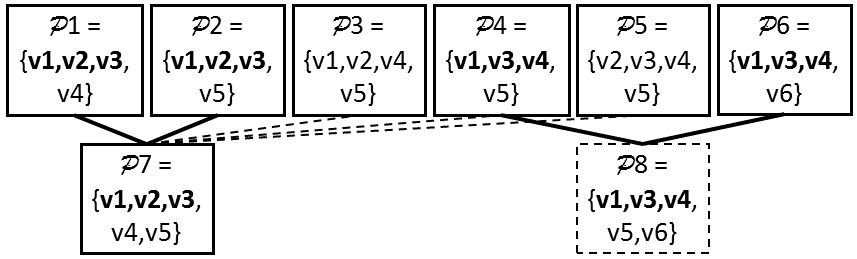}
    \caption{Generation of a new valid sharing plan}
    \label{fig:node-gen-ssp}
\end{minipage}
\end{figure}

\begin{example}
Figure~\ref{fig:node-gen-ssp} shows a portion of a search space with valid plans $\mathcal{P}_1$--$\mathcal{P}_6$ of size four. $\mathcal{P}_7$ is the only valid plan of size five. It is generated as follows. 
(1)~We identify two plans of size four that start with the same three candidates, e.g., $\mathcal{P}_1$ and $\mathcal{P}_2$ start with $\{v_1,v_2,v_3\}$. 
(2)~We compute their symmetric difference $\mathcal{P}_1 \Delta \mathcal{P}_2 = \{v_4,v_5\}$. 
(3)~Since there is no edge $(v_4, v_5)$ in Figure~\ref{fig:node-gen-graph}, $\mathcal{P}_7$ is valid. 
There is no need to check the other three parents of $\mathcal{P}_7$.
In contrast, $\mathcal{P}_8$ is invalid since $v_5$ and $v_6$ are in conflict. 
\label{ex:generation}
\end{example}

%%%%%%%%%%%%%%%%%%%%%%%%%%%%%%%%%%%%%%%%%%%%%%%%%%%%%%%%%%%%
\begin{algorithm}[t]
\begin{algorithmic}[1]
\Require \app\ graph $G=(V,E)$, set of conflict-free candidates $F$
\Ensure Optimal sharing plan $\mathit{opt} \cup F$
\State $\mathit{opt} \leftarrow \emptyset;\ \mathit{max} \leftarrow 0$
\ForAll {$v$ in $V$}
	\If {$\mathit{BValue}(v) > \mathit{max}$}
 	\State $\mathit{opt} \leftarrow \{v\};\ \mathit{max} \leftarrow \mathit{BValue}(v)$
 	\EndIf
\EndFor
\State $\mathit{Level} \leftarrow \mathit{getNextLevel}(G,V)$
\While {$\mathit{Level}\neq\emptyset$}
 	\ForAll {$P$ in $\mathit{Level}$}
  		\If {$\mathit{Score}(P) > \mathit{max}$}
   		\State $\mathit{opt} \leftarrow P;\ \mathit{max} \leftarrow \mathit{Score}(P)$
  		\EndIf
 	\EndFor
\State $\mathit{Level} \leftarrow \mathit{getNextLevel}(G,\mathit{Level})$
\EndWhile
\State \Return $\mathit{opt} \cup F$
\end{algorithmic}
\caption{Sharing plan finder algorithm}
\label{bb}
\end{algorithm}

\textbf{Sharing Plan Finder Algorithm} traverses valid search space using Breadth-First-Search. Algorithm~\ref{bb} effectively prunes invalid branches at their roots. It constructs only valid plans and returns an optimal plan among them. 

\textbf{Correctness}.
We prove that Algorithm~\ref{bb} considers all valid sharing plans, i.e., it returns the optimal sharing plan.

\begin{lemma}
If a sharing plan is valid, then it is considered by the sharing plan finder algorithm.
\label{lemma:correctness}
\end{lemma}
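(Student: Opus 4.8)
The plan is to prove Lemma~\ref{lemma:correctness} by induction on the size $s=|\mathcal{P}|$ of the valid sharing plan, tracking the induction against the breadth-first, level-by-level execution of Algorithm~\ref{bb}. First I would fix a convention: inside every plan the candidates are stored in one global total order (say, by vertex identifier), so that a plan of size $s$ is written $\{v_1,\dots,v_s\}$ with $v_1<\dots<v_s$; this is the order the join rule in Lines~7--8 of Algorithm~\ref{node-gen} tacitly relies on. The statement to establish by induction is: for every $s\ge 1$, each valid plan of size $s$ is produced by Algorithm~\ref{bb}, either as a singleton examined in Lines~2--4 or as a member of some $\mathit{Level}$; the empty plan is handled separately since $\mathit{opt}$ is initialized to $\emptyset$.

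For the base cases, the singleton $\{v\}$ with $v\in V$ is scanned in Lines~2--4 of Algorithm~\ref{bb}, and a valid two-candidate plan $\{v_1,v_2\}$ has no edge $(v_1,v_2)\in E$ by Definitions~\ref{def:valid} and \ref{def:graph}, so the $s=1$ branch of $\mathit{getNextLevel}(G,V)$ adds it to the first $\mathit{Level}$, which the while-loop then scans. For the inductive step, assume every valid plan of size $s\ (\ge 2)$ appears in the size-$s$ level, and let $\mathcal{P}=\{v_1,\dots,v_{s+1}\}$ be a valid plan of size $s+1>2$. I would take the two parents of $\mathcal{P}$ obtained by deleting the two top-ranked candidates, $\mathcal{P}_1=\{v_1,\dots,v_{s-1},v_s\}$ and $\mathcal{P}_2=\{v_1,\dots,v_{s-1},v_{s+1}\}$. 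Both are parents of $\mathcal{P}$ in the sense of Definition~\ref{def:parent-child}, hence valid by the ``$\Rightarrow$'' direction of Lemma~\ref{lemma:valid} (equivalently, by Lemma~\ref{lemma:invalid}), and by the induction hypothesis both lie in the size-$s$ level. They agree on their first $s-1$ candidates, and their distinguishing candidates $v_s$ and $v_{s+1}$ both belong to the valid plan $\mathcal{P}$, so $(v_s,v_{s+1})\notin E$. Therefore the guard in Line~7 of Algorithm~\ref{node-gen} fires for the pair $(\mathcal{P}_1,\mathcal{P}_2)$ and $\mathcal{P}_1\cup\mathcal{P}_2=\mathcal{P}$ is added to $\mathit{Children}$, so $\mathcal{P}$ appears in the next level; a final remark notes that every plan placed in a level is scored against $\mathit{max}$ in the while-loop, which completes the argument.

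The main obstacle is this inductive step, specifically arguing that the Apriori-style join never \emph{omits} a valid plan of size $s+1$. This needs the fixed ordering convention so that the two parents I single out are exactly the pair the algorithm joins, and it needs both of those parents to be valid --- which is precisely where Lemma~\ref{lemma:valid} (``all parents of a valid plan are valid'') makes the induction hypothesis applicable. Lemma~\ref{lemma:three} is the complementary soundness fact ensuring the join never introduces an \emph{invalid} plan, but for Lemma~\ref{lemma:correctness} only completeness is required; the base cases and the bookkeeping that each generated plan is actually evaluated are routine.
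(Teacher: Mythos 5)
Your proof is correct and follows essentially the same route as the paper's: induction on plan size with base cases $s=1,2$, taking the two parents $\{v_1,\dots,v_{s-1},v_s\}$ and $\{v_1,\dots,v_{s-1},v_{s+1}\}$, invoking Lemma~\ref{lemma:valid}/Lemma~\ref{lemma:invalid} to get their validity, and observing that $(v_s,v_{s+1})\notin E$ (the paper phrases this via the size-two ancestor $\{v_s,v_{s+1}\}$) so that the join in Algorithm~\ref{node-gen} produces the plan. Your explicit fixed-ordering convention and the strengthened hypothesis that valid size-$s$ plans actually appear in the size-$s$ level are just careful spellings of what the paper leaves to its data-structure remark.
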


\begin{proof}
We prove Lemma~\ref{lemma:correctness} by induction. The base cases are $s=1$ and $s=2$. First, $V$ is the set of all valid sharing plans of size 1. It is considered by Algorithm~\ref{bb}. Second, Algorithm~\ref{bb} (Line~5) generates all plans of size 2 by considering all non-adjacent vertex pairs in Algorithm~\ref{node-gen} (Lines~5--6).

Now, we assume that all valid plans of up to and including size $s$, such that $s\geq 2$, are considered. 
We will prove that all valid plans of size $s+1$ are also considered. 
Let $\mathcal{P}=\{ v_1,\ldots,v_s,v_{s+1}\}$ be a valid plan of size $s+1$. 
Then $\mathcal{P}_1=\{ v_1,\ldots,v_{s-1},v_s\}$, 
$\mathcal{P}_2=\{ v_1,\ldots,v_{s-1},v_{s+1}\}$, and 
$\mathcal{P}_3=\{v_s, v_{s+1}\}$ are ancestors of $\mathcal{P}$. By Lemma~\ref{lemma:valid}, $\mathcal{P}_1,\mathcal{P}_2,$ and $\mathcal{P}_3$ are valid. By the induction assumption, they were considered by Algorithm~\ref{bb}. Algorithm~\ref{node-gen} generates the plan $\mathcal{P}$ from $\mathcal{P}_1,\mathcal{P}_2,$ and $\mathcal{P}_3$. Thus, Algorithm~\ref{bb} considers $\mathcal{P}$.
\end{proof}

\textbf{Data Structures}.
For each plan $\mathcal{P}$, we store the list of sharing candidates and the score of $\mathcal{P}$. These candidates are sorted alphabetically by their patterns within a plan because sequential access of candidates in plans enables efficient generation of new valid plans (Lemma~\ref{lemma:three}, Example~\ref{ex:generation}).
Our plan finder stores the best plan found so far and only one level of the search space at a time. It discards a level immediately after generating the next level.

\textbf{Complexity Analysis}.
Since the entire valid search space is traversed, the algorithm has exponential time and space complexity in the worst case (Equation~\ref{eq:complexity}).
However, the \app\ optimizer is efficient on average thanks to its effective pruning principles (Section~\ref{exp:optimizer}).

\textbf{Effectiveness of Sharing Plan Finder}.
Our experiments in Section~\ref{exp:optimizer} demonstrate 
that our \app\ optimizer finds an optimal plan in reasonable time due to three effective pruning principles (i.e., non-beneficial, conflict-ridden, and invalid candidates in Sections~\ref{sec:bvalue}, \ref{sec:pruning}, and \ref{sec:finder}). Only in the following two extreme cases our solution may be ineffective:

1) Since the algorithm traverses the entire valid space, its time complexity is exponential (Equation~\ref{eq:complexity}). 
If the search space is too large, we can constrain the optimization time by $l$ seconds. If our \app\ optimizer does not return an optimal plan within $l$ seconds, we instead would run GWMIN~\cite{Sakai2003} with polynomial time complexity to find a sharing plan and start our \app\ executor using this plan. Later, when the optimal plan is produced by our \app\ optimizer, we can replace the greedily found plan by the optimal plan. 

2) The valid search space becomes small if many sharing conflicts exist (Figure~\ref{fig:ssp}). In this case, only a few patterns can be shared and a fairly low score of a sharing plan would be achieved. In the worst case, no pattern can be shared, i.e., \app\ defaults to the Non-Shared Method (Section~\ref{sec:non-shared}). Our optimizer finds such a trivial plan very quickly.

%------------------------------
\textbf{Optimal versus Greedily Chosen Plan}.
While the greedy algorithm GWMIN is useful to reduce the search space (Section~\ref{sec:pruning}), the score of a greedily chosen plan might be considerably lower than the score of an optimal plan. 

\begin{example}
Even in our small example in Figure~\ref{fig:graph}, the greedily chosen plan
$\mathcal{P}_{\mathit{gre}}$ = 
$\{(p_1,\{q_1,q_2,q_3,q_4\})$;
$(p_7,\{q_6,q_7\})\}$ 
has score 43, while the optimal plan
$\mathcal{P}_{\mathit{opt}}$ = 
$\{(p_2,\{q_3,q_4\})$; 
$(p_4,\{q_2,q_4\}$; 
$(p_6, \{q_1,q_5\})$;
$(p_7,\{q_6,q_7\})\}$ 
increases $Score(\mathcal{P}_{gre})$ by more than 16\% to 50. 
\label{ex:optimal-vs-greedy}
\end{example}

\section{Extensions of the Sharon Approach}
\label{sec:discussion}

In this section, we briefly describe the extensions of our approach to relax the simplifying assumptions in Section~\ref{sec:model}.

%---------------------------------------------------
\subsection{Sharing Conflict Resolution}

Our analysis in Section~\ref{sec:graph} reveals that promising sharing opportunities might be excluded by sharing conflicts. Generally, the more queries share a pattern the higher the probability of sharing conflicts becomes (Definition~\ref{def:sharing-conflict}). We now open up additional sharing opportunities by resolving sharing conflicts as follows.

Given a \app\ graph $G=(V,E)$, we expand each candidate $v=(p,Q_p) \in V$ with conflicts $E_v \subseteq E$ to a set of options $\mathcal{O}_p$. Each option $v' = (p,Q'_p) \in \mathcal{O}_p$ resolves a different subset of conflicts $E_v' \subseteq E_v$ of the original candidate $v$ with other candidates $u \in V \setminus O_p$. In contrast to the original candidate $v$, an option $v'$ considers sharing the pattern $p$ by a \textit{subset} of queries containing $p$, i.e., $Q'_p \subseteq Q_p,$ $|Q'_p| > 1$.

\begin{example}
In Figure~\ref{fig:graph}, the sharing candidate $(p_1,\{q_1,$ $q_2,q_3,q_4\})$ can be expanded to a set of options. The option $(p_1,\{q_1,q_3\})$ is not in sharing conflict with the candidates $(p_4,\{q_2,q_4\})$ and $(p_5,\{q_2,q_4\})$. Thus, they could belong to the same sharing plan which may have a higher score than a plan containing the original candidate $(p_1,\{q_1,q_2,q_3,q_4\})$.
\end{example}

%-----------------
\begin{definition}(\textbf{Resolved Sharing Conflict}.)
Let the candidates $v_1 = (p_1,Q_1)$ and $v_2=(p_2,Q_2) \in V$ be in conflict $(v_1,v_2) \in E$ caused by the queries $Q = Q_1 \cap Q_2$ such that $Q = Q'_1 \cupdot Q'_2$.%
\footnote{$\cupdot$ denotes disjoint set union, meaning that $Q = Q'_1 \cup Q'_2$ but $Q'_1 \cap Q'_2 = \emptyset$.}
The conflict $(v_1,v_2)$ is \textit{resolved} by omitting $Q'_1$ and $Q'_2$ from $Q_1$ and $Q_2$ respectively.
\label{def:resolved-conflict}
\end{definition}

By Definition~\ref{def:sharing-conflict}, the sharing candidates $v'_1=(p_1,Q_1 \setminus Q'_1)$ and $v'_2=(p_2,Q_2 \setminus Q'_2)$ are \textit{not in conflict} since $(Q_1 \setminus Q'_1) \cap (Q_2 \setminus Q'_2) = \emptyset$.
The conflict $(v_1,v_2)$ is resolved if \textit{any} query sets $Q'_1$ and $Q'_2$ that compose $Q$ are omitted from $Q_1$ and $Q_2$ respectively. In the worst case, \textit{all} combinations of queries $Q$ are included into the sets of options for $v_1$ and $v_2$.

%%%%%%%%%%%%%%%%%%%%%%%%%%%%%%%%%%%%%%%%%%%%%%%%%%%%%%%%%%
\begin{algorithm}[t]
\begin{algorithmic}[1]
\Require \app\ graph $G=(V,E),\ v = (p,Q_p) \in V$
\Ensure Set $O_p$ of sharing candidate options for $p$
\State $\mathit{getSet}(G,v)$ \{
\State $L_c,L_n \leftarrow$ empty stacks; $L_c.\mathit{push}(v);\ O_p \leftarrow \{v\}$
\While {$!L_c.\mathit{isEmpty}()$}
\State $v \leftarrow L_c.\mathit{pop}()$
\ForAll {conflict $(v,u)$ in $E$}
	\State $Q_c \leftarrow$ queries in $Q_p$ that cause $(v,u)$
    \ForAll {combination $C$ of $Q_c$ that can resolve $(v,u)$}
    	\State $Q'_p \leftarrow Q_p \setminus C$
		\If {$|Q'_p| > 1$ \textbf{and} $Q'_p$ is new}
    		\State $v' \leftarrow (p,Q'_p);\ L_n.\mathit{push}(v');$
            \State $O_p \leftarrow O_p \cup \{ v' \}$
    	\EndIf	
    \EndFor
\EndFor
\If {$L_c.\mathit{isEmpty}()$}
\State $L_c \leftarrow L_n;\ L_n \leftarrow$ empty stack
\EndIf
\EndWhile
\State \Return $O_p\ \}$
\end{algorithmic}
\caption{Sharing candidate expansion algorithm}
\label{algo:set-generation}
\end{algorithm}

\textbf{Sharing Candidate Expansion Algorithm}.
For a \app\ graph $G$ and a candidate $v=(p,Q_p) \in V$, Algorithm~\ref{algo:set-generation} builds a tree of options $O_p$ using Breadth First Search. The root of this tree is the original candidate $v$. To generate a child of $v$, the algorithm skips the queries from $Q_p$ that cause a conflict of $v$ with another sharing candidate $u \in V \setminus O_p$. We label an edge between $v$ and its child by the sharing candidate $u$. The algorithm terminates when no new option with at least two queries can be generated.

\textbf{Complexity Analysis}.
The time and space complexity of Algorithm~\ref{algo:set-generation} are determined by the maximal size of a set $|O_p^{max}|$.
Let $d$ be the maximal degree of a candidate $v \in V$ and $k$ be the maximal number of queries that cause a conflict. 
For each conflict $(v,u) \in E$, all combinations of queries causing this conflict are considered (nested for-loops in Lines~5--10 and 7--10). Thus, 
\begin{equation}
|O_p^{max}| = \sum_{i=0}^{d} \binom{d}{i} \sum_{j=0}^{k-1} \binom{k}{j}
\label{eq:max-set-size}
\end{equation}
where $i$ denotes the number of resolved conflicts, while $j$ corresponds to the number of skipped queries to resolve one conflict.

\begin{figure}[t]
\centering
\includegraphics[width=0.45\columnwidth]{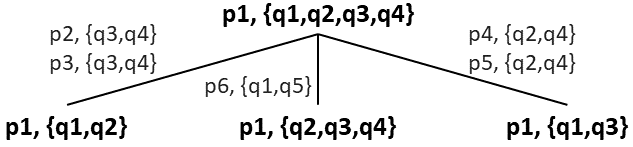}
\caption{Sharing candidate options for pattern $p_1$}
\label{fig:set-generation}
\end{figure}

\begin{example}
Figure~\ref{fig:set-generation} illustrates the sharing candidate options for the candidate $v = (p_1,\{q_1,q_2,q_3,q_4\})$ in Figure~\ref{fig:graph}.
To resolve the conflict with $u_1=(p_2,\{q_3,q_4\})$ and $u_2=(p_3,\{q_3,q_4\})$, queries $q_3$ and $q_4$ are dropped from the set of queries of $v$. 
The edge between $v$ and its child $(p_1,\{q_1,q_2\})$ is labeled by $u_1,u_2$.
Other conflicts of $v$ are resolved analogously.
\end{example}

%%%%%%%%%%%%%%%%%%%%%%%%%%%%%%%%%%%%%%%%%%%%%%%%%%%%%%%%%
\begin{algorithm}[t]
\begin{algorithmic}[1]
\Require \app\ graph $G=(V,E)$
\Ensure Expanded \app\ graph $G$
\State $V' \leftarrow \emptyset;\ E' \leftarrow \emptyset$
\ForAll {$v=(p,Q_p)$ in $V$}
	\State $O_p \leftarrow \mathit{getSet}(G,v);\ V' \leftarrow V' \cup O_p$
    \ForAll {$v'$ in $O_p$}
    \ForAll {$u$ in $V'$}
		\If {$v'$ and $u$ are in sharing conflict}
    	\State $E'.\mathit{add}(v',u)$
		\EndIf
	\EndFor
    \EndFor    
\EndFor
\State \Return $G \leftarrow (V',E')$
\end{algorithmic}
\caption{Sharing conflict resolution algorithm}
\label{algo:resolution}
\end{algorithm}

\textbf{Sharing Conflict Resolution Algorithm}. 
For a \app\ graph $G$ and each candidate $v=(p,Q_p) \in V$, (Algorithm~\ref{algo:resolution}) expands $v$ to a set of options $O_p$ to open up additional sharing opportunities. The algorithm updates the conflicts of these options and returns the expended graph.

\textbf{Complexity Analysis}.
The time complexity is determined by three nested for-loops that are called
$\Theta(|V|)$, $|O_p^{\mathit{max}}|$ and $\Theta(|V'|)$ times respectively 
where $|O_p^{\mathit{max}}|$ denotes the maximal size of a set (Equation~\ref{eq:max-set-size}). 
Since $|V| \leq |V'|$ and $|O_p^{\mathit{max}}| \leq |V'|$, the time complexity is cubic in the number of candidates in the expanded \app\ graph in the worst case, i.e., $O(|V'|^3)$.
The space complexity is determined by the size of the expanded graph, i.e., $\Theta(|V'|+|E'|)$.

\begin{figure}[t]
\centering
\includegraphics[width=0.35\columnwidth]{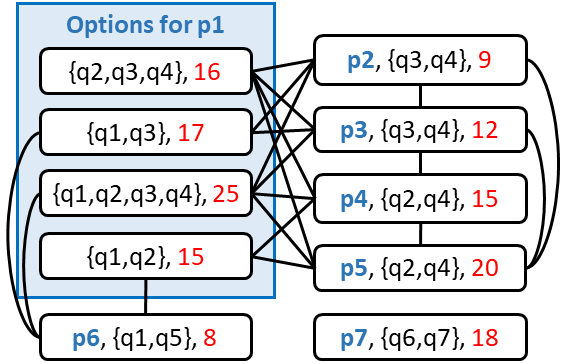}
\caption{Expanded \app\ graph}
\label{fig:expanded}
\end{figure}

\begin{example}
The \app\ graph in Figure~\ref{fig:graph} is expanded in Figure~\ref{fig:expanded}. 
The sharing candidate for pattern $p_1$ is expanded into a set of options and highlighted by a rectangle frame. 
The sets for other candidates contain only the original candidate. Conflicts within sets are omitted for readability. 
\label{ex:expanded-graph}
\end{example}

The expanded graph is then reduced (Section~\ref{sec:pruning}) and serves as input to our sharing plan finder (Section~\ref{sec:finder}).

%---------------------------------------------------
\subsection{Different Predicates, Grouping, and Windows} 

Leveraging existing techniques, our \app\ approach can share event sequence aggregation among queries with different grouping, windows, and predicates. 
Grouping partitions the stream into sub-streams by the values of grouping attributes~\cite{QCRR14, GSCL12}. Windows and predicates further partition these sub-streams into disjoint segments and share the intermediate aggregates per segment to compute the final results for each query~\cite{GSCL12, KWF06, AW04, LMTPT05}. 
These refinement strategies might not always be effective, because of a large number of small segments and the overhead of their computation. However, these are orthogonal problems. Our \app\ approach can be applied within each segment to tackle different query patterns.

%---------------------------------------------------
\subsection{Multiple Occurrences of an Event Type in a Pattern}

If an event type $E$ occurs $k$ times in a pattern, an event of type $E$ updates the counts of $k$ prefix patterns that end at $E$ (Section~\ref{sec:benefit}). Then, the time complexity of both the Non-Shared and the Shared methods increases by the multiplicative factor $k$ (Equations~\ref{eq:non-shared-1-query}, \ref{eq:comp}, and \ref{eq:shared-k-queries}). Our \app\ optimizer is not affected by this extension.

%-------------------------------------------------------
\subsection{Dynamic Workloads}

In dynamic environments, new queries may be added or existing queries may be removed. Even if the queries remain the same, the workload may still vary due to event rate fluctuations. Thus, a chosen plan may become sub-optimal. In this case, our \app\ approach leverages runtime statistics techniques~\cite{LeiR14} to detect such fluctuations and to trigger the \app\ optimizer to produce a new optimal plan based on the new workload. Dynamic plan migration techniques~\cite{KWF06, ZhuRH04} can be employed to migrate from the old to the new sharing plan and ensure that no results are lost or corrupted for stateful operators such as aggregation.

\section{Performance Evaluation}
\label{sec:evaluation}

\subsection{Experimental Setup}

\textbf{Infrastructure}.
We have implemented our \app\ approach in Java with JRE 1.7.0\_25 running on Ubuntu 14.04 with 16-core 3.4GHz CPU and 128GB of RAM. 
We execute each experiment three times and report the average  here.

%--------------------------
\textbf{Data Sets}. 
We evaluate the performance of our \app\ approach using the following data sets.

$\bullet$~\textbf{\textit{TX: New York City Taxi and Uber Real Data Set}}. 
We use the real data set~\cite{uber1} (330GB) containing 1.3 billion taxi and Uber trips in New York City in 2014--2015. Each event carries pick-up and drop-off locations and time stamps in seconds, number of passengers, price, and payment method.

$\bullet$~\textbf{\textit{LR: Linear Road Benchmark Data Set}}. 
We use the traffic simulator of the Linear Road benchmark~\cite{linear_road} for streaming systems to generate a stream of position reports from cars for 3 hours. Each position report carries a time stamp in seconds, a car identifier, its location and speed. Event rate gradually increases from few dozens to 4k events per second. 

$\bullet$~\textbf{\textit{EC: E-Commerce Synthetic Data Set}}. 
Our stream generator creates sequences of items bought together for 3 hours. Each event carries a time stamp in seconds, item and customer identifiers. We consider 50 items and 20 users. The values of item and customer identifiers of an event are randomly generated. The stream rate is 3k events per second.

We ran each experiment on the above three data sets. Due to space limitations, similar charts are not shown here.

%--------------------------
\textbf{Event Queries}. 
We evaluate a workload similar to $q_1$--$q_7$ in Section~\ref{sec:introduction} against the taxi and Linear Road data sets and a workload similar to $q_8$--$q_{11}$ against the e-commerce data set. 
Based on our cost model (Section~\ref{sec:benefit}), we vary the major cost factors, namely, number of queries, the length of their patterns, and the number of events per window. Unless stated otherwise, we evaluate 20 queries. The default length of their patterns is 10. The default number of events per window is 200k.

%--------------------------
\textbf{Methodology}. 
We run two sets of experiments.

1) \textbf{\textit{Sharon Executor vs. State-of-the-Art Approaches}} (Section~\ref{exp:SOA}). We demonstrate the effectiveness of our \app\ executor (Section~\ref{sec:benefit}) by comparing it to the state-of-the-art techniques A-Seq~\cite{QCRR14}, SPASS~\cite{RLR16}, and Flink~\cite{flink} covering the spectrum of approaches to event sequence aggregation (Figure~\ref{fig:spectrum}). While Section~\ref{sec:related_work} is devoted to a detailed discussion of these approaches, we briefly sketch their main ideas below.

$\bullet$~\textbf{\textit{A-Seq}}~\cite{QCRR14} avoids sequence construction by incrementally maintaining a count for each prefix of a pattern. However, it has no optimizer to determine which queries should share the aggregation of which patterns. By default, it computes each query independently from other queries and thus suffers from repeated computations (Section~\ref{sec:non-shared}).

$\bullet$~\textbf{\textit{SPASS}}~\cite{RLR16} defines shared event sequence construction. Their aggregation is computed afterwards and is not shared. Thus, SPASS is a two-step and only partially shared approach.

$\bullet$~\textbf{\textit{Flink}}~\cite{flink} is a popular open-source streaming system that supports event pattern matching and aggregation. We express our queries using Flink operators. Flink constructs all event sequences prior their aggregation. It does not share computations among different queries.

To achieve a fair comparison, we have implemented A-Seq and SPASS on top of our platform. We execute Flink on the same hardware as our platform.

2) \textbf{\textit{Sharon Optimizer}} (Section~\ref{exp:optimizer}). We study the efficiency of our \app\ optimizer (Sections~\ref{sec:graph}--\ref{sec:discussion}) by comparing it to the greedy algorithm GWMIN~\cite{Sakai2003} and to exhaustive search. We also compare the quality of a greedily chosen plan returned by GWMIN to an optimal plan returned by our \app\ optimizer and the exhaustive search.

%--------------------------
\textbf{Metrics}. 
We measure the following metrics common for streaming systems. 
\textbf{\textit{Latency}} is measured in milliseconds as the average time difference between the time point of aggregate output and the arrival time of the latest event that contributed to this result.
\textbf{\textit{Throughput}} corresponds to the average number of events processed by all queries per second.
\textbf{\textit{Peak memory}} is measured in bytes.
For event sequence aggregation algorithms, it is the maximal memory for storing aggregates, events, and event sequences.
For the optimizer algorithms, the peak memory is the maximal memory for storing the \app\ graph and the sharing plans during space traversal.

%%%%%%%%%%%%%%%%%%%%%%%%%%%%%%%%%%%%%%%%%%%%%%%%%
\subsection{Sharon Executor versus State-of-the-Art Approaches}
\label{exp:SOA}

\begin{figure}[t]
	\centering
    \subfigure[Latency]{
    	\includegraphics[width=0.23\columnwidth]{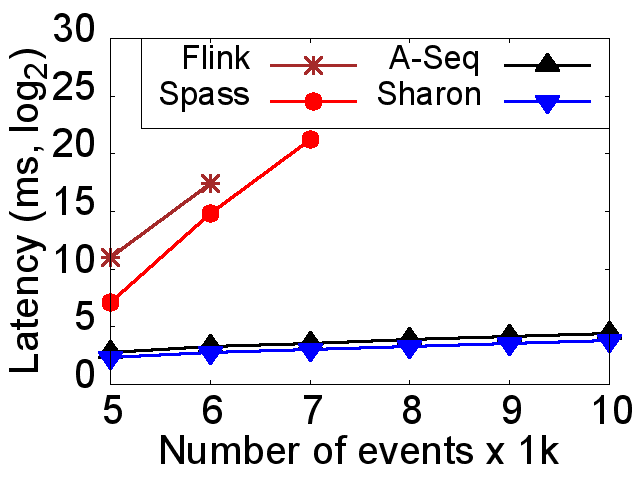}
    	 \label{fig:two-steps-latency}
	}
	\subfigure[Throughput]{
    	\includegraphics[width=0.23\columnwidth]{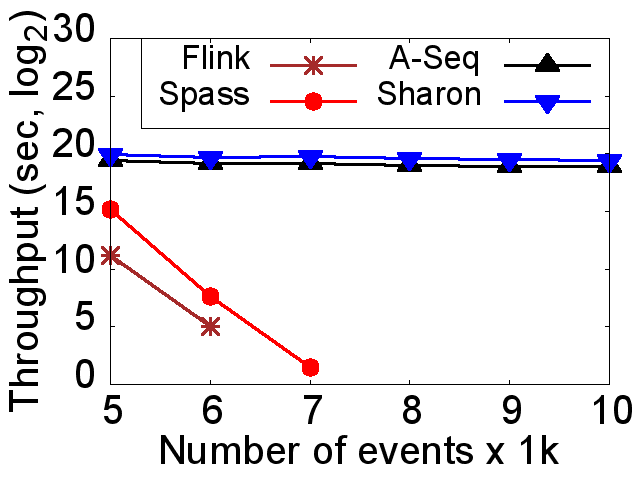}
    	\label{fig:two-steps-thru}
	}
    \vspace*{-2mm}
	\caption{Two-step versus online approaches (Linear Road data set)}
	\label{fig:two-steps}
\end{figure}

\textbf{Two-step Approaches}.
In Figure~\ref{fig:two-steps}, we vary the number of events per window and measure latency and throughput of the event sequence aggregation approaches using the Linear Road benchmark data set. Latency of the two-step approaches (SPASS and Flink) increases exponentially, while throughput decreases exponentially in the number of events. 

\textbf{\textit{SPASS}} achieves 6--fold speed-up compared to Flink for 6k events per window because SPASS shares event sequence construction. Due to event sequence construction overhead, SPASS does not terminate when the number of events exceeds 7k. These measurements are not shown in Figure~\ref{fig:two-steps}.

\textbf{\textit{Flink}} not only constructs all event sequences but also computes each query independently from other queries in the workload. Flink fails for more than 6k events per window. 

The event sequence construction step has polynomial time complexity in the number of events~\cite{ZDI14, QCRR14} and may jeopardize real-time responsiveness for high-rate event streams (Figure~\ref{fig:two-steps}). Thus, these two-step approaches cannot be effective for time-critical processing of high-rate streams.

%%%%%%%%%%%%%%%%%%%%%%%%%%%%%%%%%%%%%%%%%%%%%%%%%%%%%%%%%%%%%
\begin{figure*}[t]
	\centering
    \subfigure[Latency (TX)]{
    	\includegraphics[width=0.23\columnwidth]{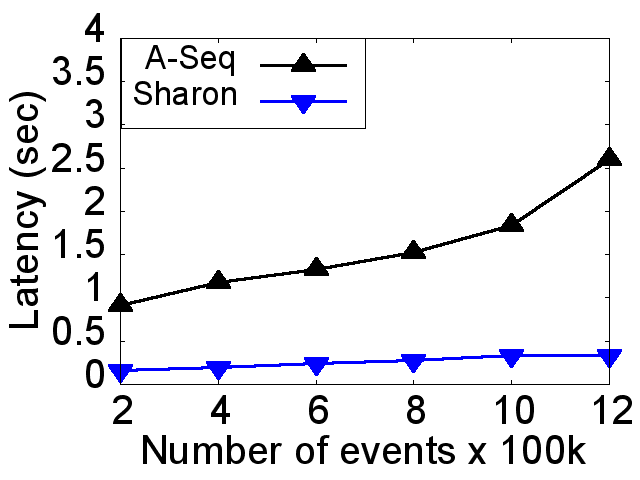}
    	 \label{fig:latency-r}
	}
	\subfigure[Latency (LR)]{
    	\includegraphics[width=0.23\columnwidth]{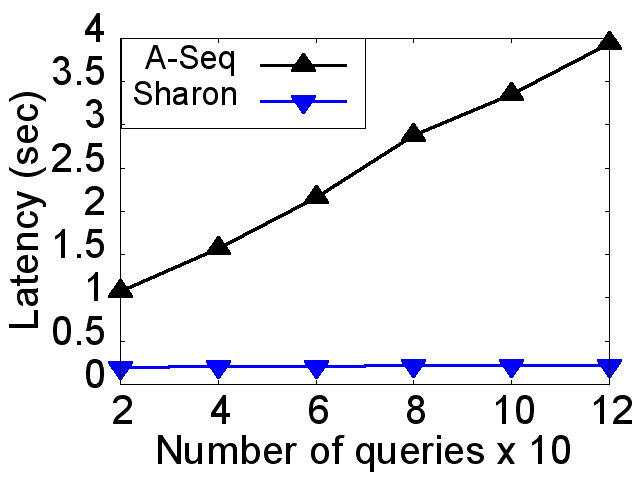}
    	\label{fig:latency-k}
	}
	\subfigure[Latency (EC)]{
    	\includegraphics[width=0.23\columnwidth]{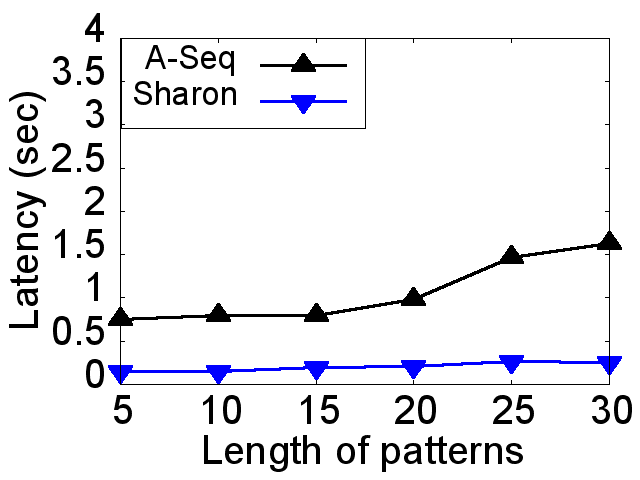}
    	 \label{fig:latency-l}
	}
	\subfigure[Memory (LR)]{
    	\includegraphics[width=0.23\columnwidth]{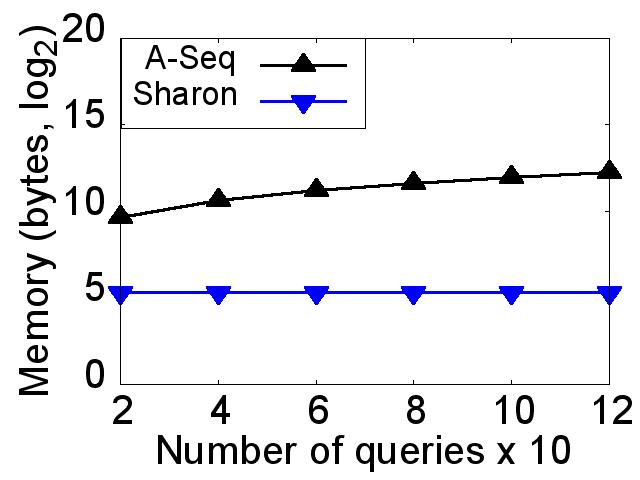}
    	\label{fig:mem-k}
	}
%-------------------------------------------------------------------
	\subfigure[Throughput (TX)]{
    	\includegraphics[width=0.23\columnwidth]{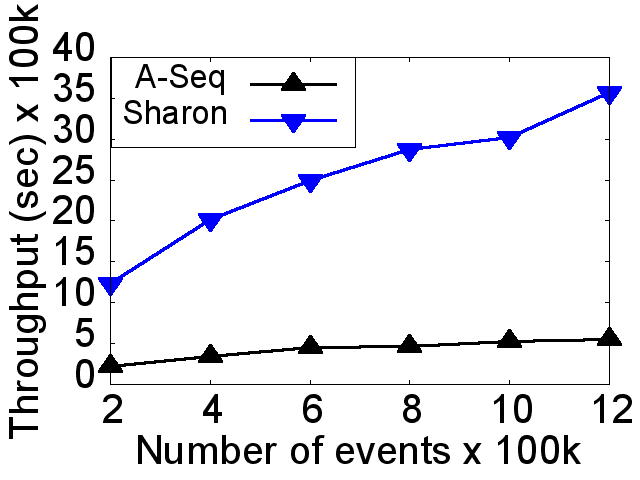}
    	 \label{fig:thru-r}
	}
	\subfigure[Throughput (LR)]{
    	\includegraphics[width=0.23\columnwidth]{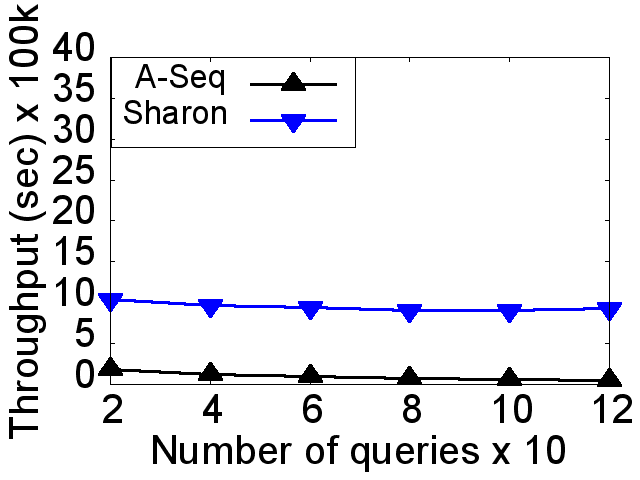}
    	\label{fig:thru-k}
	}
	\subfigure[Throughput (EC)]{
    	\includegraphics[width=0.23\columnwidth]{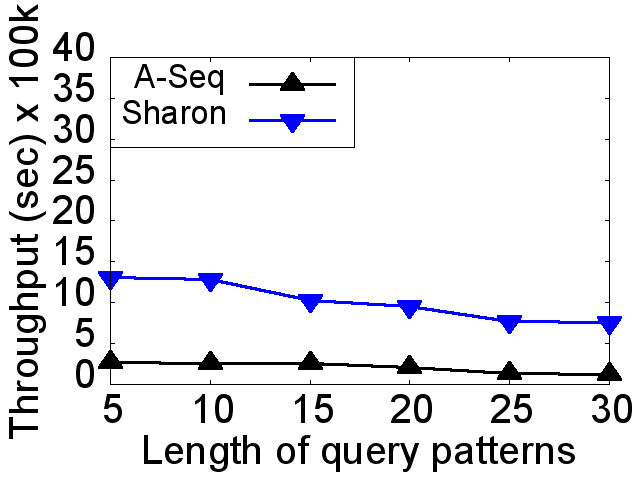}
    	 \label{fig:thru-l}
	}
	\subfigure[Memory (EC)]{
    	\includegraphics[width=0.23\columnwidth]{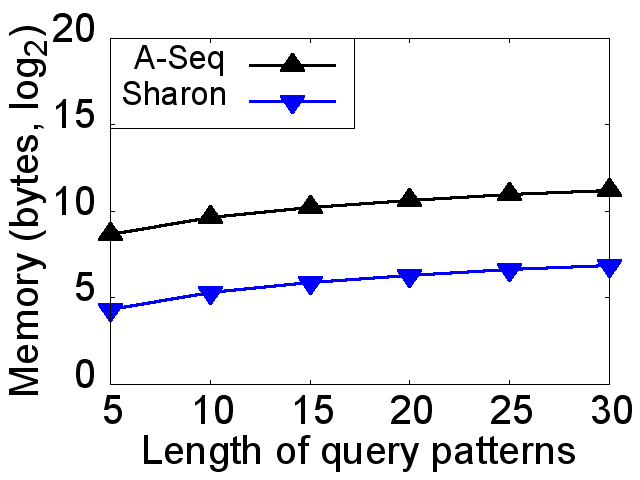}
    	\label{fig:mem-l}
	}    
	\caption{Online approaches (Taxi (TX), Linear Road (LR), and e-commerce (EC) data sets)}
	\label{fig:online}
\end{figure*}

\textbf{Online Approaches}. The online approaches (A-Seq and \app) perform similarly for such low-rate streams. They achieve five orders of magnitude speed-up compared to SPASS for 7k events per window because they aggregate event sequences without first constructing these sequences. 

Figure~\ref{fig:online} evaluates the online approaches against high-rate streams. We vary the number of events per window, the number of queries, and the length of their patterns and measure latency, throughput and memory of the online approaches.

% Number of queries
The \textbf{\textit{Sharon Executor}} shares event sequences aggregation among all queries in the workload according to an optimal sharing plan that is computed based on an expanded \app\ graph (Section~\ref{sec:discussion}). 
The latency of \app\ and A-Seq grows linearly in the number of queries. \app\ achieves from 5--fold to 18--fold speed-up compared to \textbf{\textit{A-Seq}} when the number of queries increases from 20 to 120. Indeed, the more queries share their aggregation results, the fewer aggregates are maintained and the more events can be processed by the system (Figures~\ref{fig:latency-k} and \ref{fig:thru-k}).
\app\ requires up to two orders of magnitude less memory than A-Seq for 120 queries (Figure~\ref{fig:mem-k}). 
For low parameter values, \app\ defaults to A-Seq due to limited sharing opportunities.

% Rate
While \app\ processes each event by each shared pattern exactly once,  each event can provoke repeated computations in A-Seq. Thus, the gain of \app\ grows linearly in the number of events per window. \app\ wins from 5--fold to 7--fold with respect to latency and throughput when the number of events increases from 200k to 1200k (Figures~\ref{fig:latency-r} and \ref{fig:thru-r}). 
% Length of patterns
Similarly, the speed-up of \app\ grows linearly from 4--fold to 6--fold with the increasing length of patterns (Figure~\ref{fig:latency-l}).
\app\ requires 20-fold less memory than A-Seq if the pattern length is 30 (Figure~\ref{fig:mem-l}).

Based on the experimental results in Figures~\ref{fig:two-steps} and \ref{fig:online}, we conclude that the latency, throughput and memory utilization of event sequence aggregation can be considerably reduced by the seamless integration of \textit{shared} and \textit{online} optimization techniques as proposed by our \app\ approach to enable real-time in-memory event sequence aggregation.

%%%%%%%%%%%%%%%%%%%%%%%%%%%%%%%%%%%%%%%%%%%%%%%%%%
\begin{figure*}[t]
\centering
	\begin{minipage}{0.7\textwidth}	
	\subfigure[Latency]{
    	\includegraphics[width=0.48\textwidth]{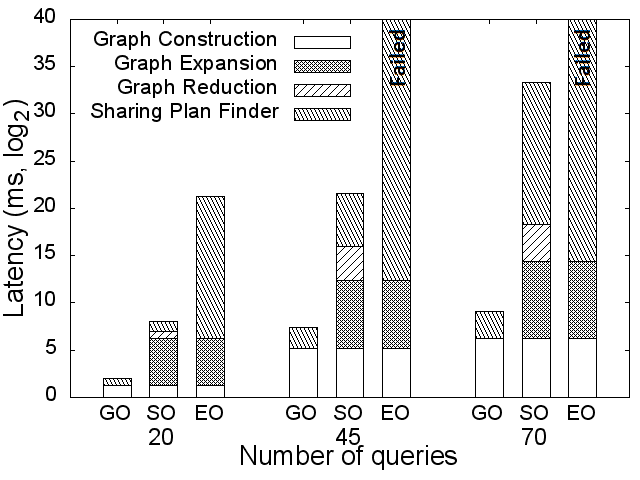}
    	 \label{fig:opt-latency}
	}
    \hspace*{2mm}
	\subfigure[Memory]{
    	\includegraphics[width=0.48\textwidth]{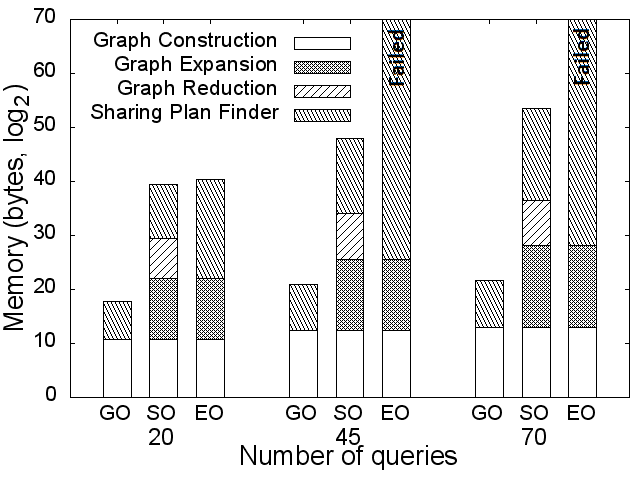}
    	\label{fig:opt-memory}
	}
	\caption{\app\ optimizer (SO) versus greedy optimizer (GO) and exhaustive optimizer (EO) (E-commerce query workload)}
	\label{fig:optim}
    \end{minipage}
    \hspace*{2mm}
    \begin{minipage}{0.25\textwidth}
    \centering
    \subfigure{
    	\includegraphics[width=0.85\textwidth]{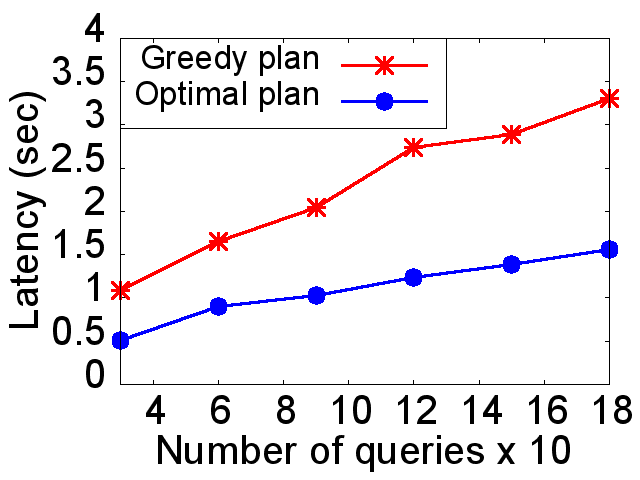}
    	\label{fig:2-plans-latency}
	}    
    \subfigure{
    	\includegraphics[width=0.85\columnwidth]{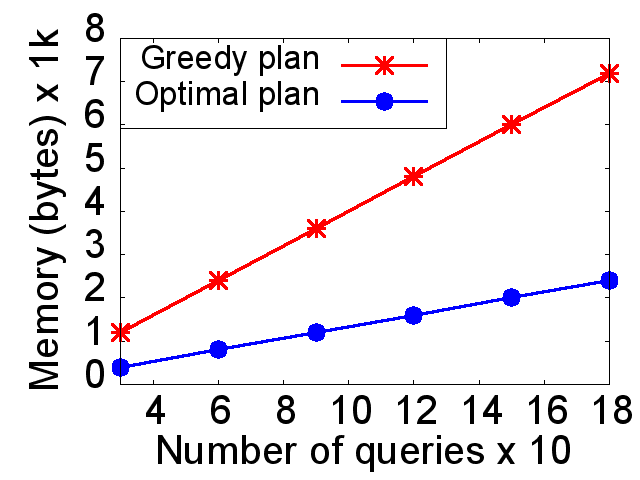}
    	\label{fig:2-plans-mem}
	}
	\caption{Sharing plan quality (Taxi data set)}
	\label{fig:2-plans}
    \end{minipage}
\vspace*{-4mm}
\end{figure*}

\subsection{Sharon Optimizer}
\label{exp:optimizer}

In Figure~\ref{fig:optim} we compare three optimizer solutions, while varying the number of queries. Each bar is segmented into phases as described below.

The \textbf{\textit{Greedy Optimizer}} consists of the following two phases: \app\ graph construction (Section~\ref{sec:graph}) and the GWMIN plan finder. 
In the worst case, both phases have polynomial latency and linear memory. However, our experiments show that on average more time and space is required to construct the \app\ graph than to run GWMIN. For 70 queries, 90\% of the time is spent constructing the graph.

The \textbf{\textit{Exhaustive Optimizer}} consists of three phases, namely, \app\ graph construction, graph expansion (Section~\ref{sec:discussion}), and an exhaustive search that traverses the entire search space to find an optimal plan. Thus, its latency and memory costs grow exponentially in the number of queries. The exhaustive optimizer fails to terminate for more than 20 queries. For 20 queries, its latency is 4 orders of magnitude higher than the latency of the greedy optimizer.

The \textbf{\textit{Sharon Optimizer}} consists of four phases, namely, \app\ graph construction, graph expansion, graph reduction, and the sharing plan finder that returns an optimal plan (Sections~\ref{sec:graph}--\ref{sec:discussion}). While its time and space complexity is exponential in the worst case (Equation~\ref{eq:complexity}), its latency and memory usage are reduced by our pruning principles compared to the exhaustive optimizer. 
On average, 36\% of the sharing candidates are pruned from the expanded \app\ graph, which is 99\% of the plan finder search space.
For 20 queries, \app\ outperforms the exhaustive optimizer by three orders of magnitude with respect to latency and by two orders of magnitude regarding memory usage. 

Our \app\ plan finder traverses the entire valid space to find an optimal plan. In contrast, GWMIN greedily selects one candidate with the highest benefit and eliminates its adjacent candidates from further consideration.
For example, for 70 queries, the latency of \app\ is three orders of magnitude higher, while its memory usage is two orders of magnitude larger compared to the greedy optimizer. 

%-------------------------------------------------
\textbf{Sharing Plan Quality}. 
The greedy optimizer tends to return a sub-optimal sharing plan for two reasons.
One, it greedily selects a candidate $v$ with the maximal benefit in each step. By deciding to share $v$ it excludes all candidates adjacent to $v$ even though they may be more beneficial to share than $v$ alone. 
Two, the greedy optimizer does not resolve sharing conflicts  (Section~\ref{sec:discussion}). However, the sharing opportunities in the original \app\ graph may be rather limited (Figure~\ref{fig:graph}).  

In Figure~\ref{fig:2-plans}, we vary the number of queries and compare the latency and memory consumption of our \app\ executor when guided by a greedily chosen plan versus an optimal plan. We run these experiments on the Taxi real data set. The latency of the \app\ executor is reduced 2--fold and its memory consumption decreases 3--fold when 180 queries are processed according to an optimal plan compared to a greedily chosen plan. Thus, an optimal plan ensures real-time, light-weight event sequence aggregation.

\section{Related Work}
\label{sec:related_work}

\textbf{Complex Event Processing} (CEP) approaches such as SASE~\cite{ADGI08,ZDI14}, Cayuga~\cite{DGPRSW07}, and ZStream~\cite{MM09} support both event aggregation and event sequence detection over streams. 
SASE and Cayuga employ a Finite State Automaton (FSA)-based query execution paradigm, meaning that each event query is translated into an FSA. Each run of an FSA corresponds to a query match. 
In contrast, ZStream translates an event query into an operator tree that is optimized based on rewrite rules.
However, these approaches evaluate \textit{each query independently from other queries} in the workload -- causing both repeated computations and replicated storage in multi-query settings. 
Furthermore, they do not optimize event sequence aggregation queries -- which is the focus of our work. Thus, they require event sequence construction prior to their aggregation. Since the number of event sequences is polynomial in the number of events per window~\cite{ZDI14, QCRR14}, this \textit{two-step approach} introduces long delays for high-rate streams (Section~\ref{sec:evaluation}). 

In contrast, A-Seq~\cite{QCRR14} defines \textit{online} event sequence aggregation that eliminates event sequence construction. It incrementally maintains an aggregate for each pattern and discards an event once it updated the aggregates. We leverage this idea in our executor (Section~\ref{sec:benefit}). However, A-Seq has no optimizer to decide which patterns should be shared by which queries. Thus, A-Seq does not share event sequence aggregation. 
\greta~\cite{PLRM18} extends A-Seq by nested Kleene patterns and expressive predicates at the cost of storing of all matched events. Similarly to A-Seq, \greta\ optimizes \textit{single} queries.

\textit{CEP Multi-Query Optimization} (MQO) approaches such as SPASS~\cite{RLR16}, E-Cube~\cite{LRGGWAM11}, and RUMOR~\cite{hong2009rule} propose event sequence sharing techniques. 
SPASS exploits event correlation in an event sequence to determine the benefit of shared event sequence construction. 
E-Cube defines a concept and a pattern hierarchy of event sequence queries and  develops both top-down and bottom-up processing of patterns based on the results of other patterns in the hierarchy. 
RUMOR proposes a rule-based MQO framework for traditional RDBMS and stream processing systems. It defines a set of rules to merge NFAs representing different event queries. 
However, no optimization techniques for online aggregation of event sequences are proposed by the approaches above. They too construct all event sequences  prior to their aggregation. Event sequence construction degrades system performance.

%%%%%%%%%%%%%%%%%%%%%%%%%%%%%%%%%%%%%%%%%%%%%%%%%%%%%%%%%%%%%%%%%%%%
\textbf{Data Streaming}. Streaming approaches typically support incremental aggregation~\cite{GSCL12, KWF06, AW04, LMTPT05, GHMAE07, LMTPT052, THSW15, ZKOS05, ZKOSZ10}. 
Some of them are shared. However, they solve an orthogonal problem. Namely, they enable shared aggregation given different windows, predicates, and group-by clauses~\cite{GSCL12, KWF06, AW04, LMTPT05}. Thus, they could be plugged into our approach as described in Section~\ref{sec:discussion}.
However, in contrast to \app, many of them aggregate only \textit{raw input events for single-stream queries}~\cite{GSCL12, LMTPT05, LMTPT052}.
Others evaluate simple Select-Project-Join queries with window semantics over data streams~\cite{KWF06}. They do not support CEP-specific operators such as event sequence that treat the order of events as a first-class citizen. Typically, they require the \textit{construction of join results} prior to their aggregation.

%%%%%%%%%%%%%%%%%%%%%%%%%%%%%%%%%%%%%%%%%%%%%%%%%%%%%%%%%%%%%%%%%%%
\textbf{Multi-Query Optimization} techniques include materialized views~\cite{MATViewICDE} and common sub-expression sharing~\cite{chakravarthy1986multiple,giannikis2010crescando} in relational databases. However, these approaches do not have the temporal aspect prevalent for CEP queries. Thus, they neither focus on event sequence computation nor their aggregation.
Furthermore, they assume that the data is statically stored on disk prior to processing. They neither target in-memory execution nor real-time responsiveness.

\section{Conclusions and Future Work}
\label{sec:conclusions}

Our \app\ approach is the first to enable \textit{shared online} event sequence aggregation. The \app\ optimizer encodes sharing candidates, their benefits and conflicts among them into the \app\ graph. Based on the graph, we define effective candidate pruning principles to reduce the search space of sharing plans. Our sharing plan finder returns an optimal plan to guide the executor at runtime. \app\ achieves an 18--fold speed-up compared to state-of-the-art approaches.

In the future, we plan to further investigate event sequence aggregation sharing for dynamic workloads to produce a new optimal sharing plan on the fly and migrate from the old to the new sharing plan with minimal overhead. 
Another interesting direction for future work is to leverage modern distributed multi-core clusters of machines to further improve the scalability of shared online event sequence aggregation.

\bibliographystyle{abbrv}
\bibliography{sharon-tr}

\section*{Appendix}

\begin{appendix}
\section{Sharable Pattern Detection}
\label{app:ccspan}

%\vspace*{-2mm}
\begin{algorithm}[t]
\begin{algorithmic}[1]
\Require A query workload $Q$
\Ensure A hash table $S$ mapping a sharable pattern $p$ to a set of queries in $Q_p$ that contain $p$
\State $H, S \leftarrow \text{empty hash tables}$
\ForAll {$q$ in $Q$}
	\State $l \leftarrow q.\mathit{pattern}.\mathit{length}$
	\ForAll {$\mathit{end} = 0; \mathit{end} \leq l; \mathit{end}++$}
    \ForAll {$\mathit{start} = 0; \mathit{start} \leq \mathit{end}; \mathit{start}++$}
        \State $p = q.\mathit{pattern}.\mathit{substring}(\mathit{start},\mathit{end})$
        \If {$p.\mathit{length} > 1$} 
        \State $Q_p \leftarrow H.\mathit{get}(p); Q_p.\mathit{add}(q); H.\mathit{put}(p,Q_p)$ 
        \EndIf
    \EndFor        	
    \EndFor    
\EndFor
\ForAll {$p$ in $H$}
	\State $Q_p \leftarrow H.\mathit{get}(p)$
	\If {$Q_p.\mathit{size} > 1$} $S.\mathit{put}(p,Q_p)$
    \EndIf
\EndFor
\State \Return $S$
\end{algorithmic}
\caption{Modified CCSpan algorithm}
\label{ccspan-code}
\end{algorithm}
%\vspace*{-2mm}

To detect sharable patterns (Definition~\ref{def:sharable}), we deploy a version of the CCSpan algorithm~\cite{ZWY15}. In this section, we first describe the original CCSpan algorithm. We then justify our changes and provide the modified algorithm. 

%%%
\textbf{Original CCSpan Algorithm}. 
CCSpan stands for \underline{C}losed \underline{C}ontiguous \underline{S}equential \underline{P}attern mining. A pattern is considered to be \textit{frequent} if it appears in more input sequences than the given support. A pattern is \textit{contiguous} if it is not interrupted by other patterns in the input sequences. Lastly, a pattern is called \textit{closed} if it cannot be further extended.

CCSpan adopts a pattern growth algorithm that records the pattern's occurrence in the input sequences. That is, starting from length 1, the algorithm recursively extends the patterns to their maximal length. CCSpan has linear time complexity in the number of input sequences assuming that the maximal length of input sequences is a small constant.

%%%
\textbf{Modified CCSpan Algorithm}.
Since shorter sequences can be shared between more queries than longer sequences, we detect not only frequent closed (or longest) sequences but also their sub-sequences. However, sharing a sequence of length one is not beneficial. Thus, we alter the original CCSpan algorithm to detect all \textit{frequent contiguous sequential patterns} of length $l > 1$. A pattern is considered to be \textit{frequent} if it appears in more than one query.

The modified CCSpan algorithm (Algorithm~\ref{ccspan-code}) consumes the query workload $Q$ and returns a hash table $S$ that maps each sharable pattern $p$ to the set of queries $Q_p \subseteq Q$ that contain $p$. Two empty hash tables $H$ and $S$ are initialized in Line~1. $S$ contains all sharable patterns, while $H$ maintains all patterns, i.e., $S \subseteq H$. For each query $q \in Q$, the algorithm considers each sub-pattern $p$ of the pattern of $q$ (Lines~3--6). If the length of $p$ is greater than one, the query $q$ is added to the set of queries $Q_p$ that $p$ appears within (Line~7--8). Lastly, we access each pattern $p$ in the hash table $H$ and if $p$ appears more than one query, the pattern $p$ and its respective queries $Q_p$ are added to the result $S$ (Lines~9--11). The hash table $S$ is returned in Line~12.

\textbf{Complexity Analysis}.
Let $n$ be the number of queries in $Q$ and $l$ be the maximal length of a pattern. Then, the three for-loops in Lines~2--8 are called $O(nl^2)$ times. In addition, the for-loop in Lines~9--11 iterates $O(nl)$ times. All operations on the hash tables $H$ and $S$ happen in constant time. In summary, the time complexity is linear in $n$, i.e., $O(nl^2) + O(nl) = O(n)$ since $l$ is a small constant in practice.
The space complexity is determined by the size of the hash tables $H$ and $S$. The number of stored patterns is $O(nl)$. Each pattern is mapped to $O(n)$ queries. In summary, the space complexity is quadratic in $n$, i.e., $O(n^2 l) = O(n^2)$.

%%%%%%%%%%%%%%%%%%%%%%%%%%%%%%%%%%%%%%%%%%%%%%%%%%%%%%%%%%%%%%%%%%%%
\section{GWMIN Algorithm}
\label{app:gwmin}

%\vspace*{-2mm}
\begin{algorithm}[t]
\begin{algorithmic}[1]
\Require A weighted graph $G=(V,E)$ %of sharing candidates
\Ensure An independent set $\mathit{IS}$ %of shared sub-patterns
\State $\mathit{IS}\leftarrow\emptyset;\ i\leftarrow 0;\ G_i\leftarrow G$
\While {$V(G_i) \neq \emptyset$}
	\State $\mathit{max}\leftarrow0$
		\ForAll {$v$ in $V(G_i)$}
        	\State $\mathit{new}\_\mathit{max} \leftarrow \frac{\mathit{weight}(v)}{\mathit{degree}_{G_i}(v)+1}$
    		\If {$\mathit{new}\_\mathit{max} > \mathit{max}$}
            	\State $\mathit{max} \leftarrow \mathit{new}\_\mathit{max};\ v_i \leftarrow v$
    		\EndIf
    	\EndFor
    \State $\mathit{IS} \leftarrow \mathit{IS} \cup \{v_i\};\ G_{i+1} \leftarrow G_i[V(G_i) \setminus \mathcal{N}_{G_i}^+(v_i)];\ i++$
\EndWhile
\State \Return $\mathit{IS}$
\end{algorithmic}
\caption{GWMIN algorithm}
\label{gwmin-code}
\end{algorithm}
%\vspace*{-2mm}

The GWMIN algorithm finds the MWIS in a graph~\cite{Sakai2003}. GWMIN stands for \underline{G}reedy \underline{Min}imum degree algorithm for \underline{W}eighted graphs. 
Algorithm~\ref{gwmin-code} consumes a weighted graph $G=(V,E)$ and returns its independent set $\mathit{IS}$. At the beginning, the set $\mathit{IS}$ is empty, the iteration counter $i=0$, and the graph in $i^{\mathit{th}}$ iteration $G_i = G$ (Line~1). In each iteration, the algorithm selects the vertex $v$ with the maximal ratio
\begin{equation*}
\frac{\mathit{weight}(v)}{\mathit{degree}_{G_i}(v)+1}
\end{equation*}
where $\mathit{weight}(v)$ is the weight and $\mathit{degree}_{G_i}(v)$ is the degree of $v$ in i$^{\mathit{th}}$ iteration (Lines~3--7). This vertex $v$ is added to the independent set $\mathit{IS}$, $v$ and its neighbors are removed from the graph $G$ (Line~8). Once the graph $G$ is empty, the algorithm returns the independent set $\mathit{IS}$ (Lines~2, 9).

%%%%%%%%%%%%%%%%%%%%%%%%%%%%%
\textbf{Complexity Analysis}. 
The for-loop in Lines~4--7 is called $O(|V|)$ times. The time complexity of removing a vertex $v$ and its neighbors from the graph is $O(|E|)$. Thus, the time complexity of one iteration of the while-loop in Lines 2--9 is $O(|V|+|E|)$. This while-loop is called $O(|V|)$ times. In summary, the time complexity is $O(|V| (|V| + |E|)$.
The space complexity is linear in the size of the graph $G$ and its independent set $\mathit{IS}$, i.e., $O(|V|+|E|)$.
\end{appendix}

\end{document}